\documentclass[pdflatex]{sn-jnl}

\usepackage{amsfonts, amsmath, amsthm, amssymb,mathtools} 
\usepackage{algorithm}
\usepackage{bm}
\usepackage{algpseudocode}
\usepackage{graphicx} 
\usepackage{booktabs} 
\usepackage{natbib}  
\usepackage{subcaption}
\usepackage{placeins}

\theoremstyle{thmstyleone}%
\newtheorem{theorem}{Theorem}
\theoremstyle{thmstyletwo}%
\newtheorem{remark}{Remark}%
\newtheorem{lem}{Lemma}
\theoremstyle{thmstylethree}%
\newtheorem{definition}{Definition}%

\begin{document}

\title[]{Scan Statistics for Nonhomogeneous Poisson Processes with Extreme-Value Calibration and Application to CNV Detection}

\author[1]{\fnm{Asanka R.} \sur{Duwage}}\email{ard490@msstate.edu}
\equalcont{These authors contributed equally to this work.}

\author*[1]{\fnm{Tung-Lung} \sur{Wu}}\email{twu@math.msstate.edu}
\equalcont{These authors contributed equally to this work.}


\affil[1]{\orgdiv{Department of Mathematics and Statistics}, \orgname{Mississippi State University}, \orgaddress{ \city{Mississippi State}, \postcode{39762}, \state{Mississippi}, \country{USA}}}




\abstract{
We develop a scan statistic method for detecting local clusters in a two-sample nonhomogeneous Poisson process (NHPP) framework, motivated by copy number variation (CNV) analysis in next-generation sequencing data. The control sample is used to construct an empirical time transformation, under which the transformed case sample is approximately uniform on [0,1] under the null hypothesis. The scan statistic is defined as the maximum number of transformed points within a moving window.
		
We show that the scan statistic converges to a generalized extreme value (GEV) distribution with an extremal index that captures the dependence induced by overlapping windows. The GEV parameters and extremal index are estimated using maximum likelihood and exceedance clustering methods, providing an asymptotic calibration of the test. A permutation procedure is also developed to provide a nonparametric alternative. Simulation studies show that the permutation calibration maintains empirical Type I error
close to the nominal level across the considered settings, and the GEV calibration is
accurate for smaller windows. Both proposed procedures
show competitive power compared with the continuous testing method under heterogeneous
baseline intensities. An application to sequencing data illustrates the effectiveness of the proposed approach for detecting CNV regions.	
}

\keywords{Scan statistic, cluster detection, nonhomogeneous Poisson process, extreme value theorem, two-sample, CNV}



\maketitle

\section{Introduction}\label{sec1}

Scan statistics provide a hypothesis-testing framework for detecting localized clusters of events in time or space.
 They are used in many fields, including epidemiology
(\citealt{li_chen_2008}), engineering (\citealt{Neil:2013}), public health (\citealt{kulldorff2009}),
molecular biology (\citealt{Berry2014}), and astronomy (\citealt{Darling:1986}). The main question is
whether an observed local aggregation is unlikely to occur under a baseline model. Classical
scan procedures slide a window across the domain, compute a local count (or related statistic)
inside each window, and reject the null when the maximum over windows is too large relative
to its null distribution.

The scan statistic is an extreme-value functional of many
overlapping-window statistics, which are strongly dependent. Therefore, a complete testing
procedure requires a good approximation of the tail probability of the scan statistic under
the null. Large bodies of work exist for homogeneous Poisson processes and for independent
and identically distributed (i.i.d.) models, including Bernoulli, Binomial, Poisson, Normal,
Exponential, Gamma, and related families (\citealt{textbook1, textbook2}). Further developments
include higher-dimensional scans (\citealt{Glaz_CHEN:1996, Glaz_Gue-2010}), flexible
window shapes (\citealt{Kulldorff_Martin-2006}), variable window sizes
(\citealt{Xiao-2014, Nagarwalla1996, Patil2004}), and a variety of exact and approximate methods
for computing tail probabilities and critical values.

Many modern data sets have \emph{nonhomogeneous} baselines, where the event rate
changes across the domain. This happens often in two-sample point-process problems, where a
``case'' process is compared with a ``control'' process. In such settings, using a
homogeneous null can inflate Type I error, because large local counts may simply reflect a
high baseline intensity rather than a true cluster. The window overlap dependence is still
present, and the baseline heterogeneity adds another layer of difficulty. Several approaches
handle heterogeneity by explicit modeling or approximations (\citealt{ShenZhang2012, WU2017}), but
rigorous null distributions for scan statistics under nonhomogeneous Poisson process (NHPP)
models remain limited, especially in the two-sample framework.

This paper is motivated by copy number variation (CNV) and copy number alteration (CNA)
detection in cancer genomics (\citealt{Shlien2009, Joan2018}). Next-generation sequencing (NGS)
produces millions of short reads that are mapped to genomic positions. These mapped read
locations can be modeled naturally as realizations of point processes. CNVs/CNAs appear as
localized changes in the relative read intensity between tumor and normal samples, often under
strong baseline heterogeneity along the genome. In the change-point framework of
\citet{ShenZhang2012}, the tumor and normal read count processes along a chromosome are modeled
as two NHPPs, and CNV/CNA events correspond to segments where the relative intensity changes.
They also discussed practical limitations of binning or fixed-window methods. Due to read inhomogeneity, the optimal window size may vary across the genome, and fixed binning can lead to imprecise boundaries when breakpoints do not align with bin edges.
 These
observations motivate methods that operate directly on the point-process representation while
still providing valid statistical inference.

We formulate CNV/CNA detection as a \emph{two-sample cluster detection} problem. Let the normal
(control) sample and the tumor (case) sample be independent NHPPs. Under the null, the two intensity functions are proportional, 
\(\lambda_1(t)=\rho\lambda_0(t)\) for some \(\rho>0\); equivalently, conditional on the
total numbers of reads, the case and control event locations have the same distribution.
Under the alternative, there exists an unknown window where the tumor intensity is
locally higher (amplification) or lower (deletion) than this proportional baseline. The scan statistic searches over possible windows and reports the largest local
departure. The key step is to obtain a usable null distribution for the scan statistic when
the baseline is heterogeneous and unknown.

Recent work by \citet{Picard2018} proposed a different perspective through \emph{continuous
testing}. They consider a Poisson process observed on a fixed domain and test a continuum of
local hypotheses indexed by the window location. For a window of fixed width $\eta$, the data
in each window are summarized by the number of events falling inside that window. This leads
to a location-indexed collection of $p$-values (a ``$p$-value process''), where each $p$-value
measures how surprising the observed window count is under the null intensity. Their framework
then applies multiple-testing ideas to control error rates such as FWER or FDR over an
uncountable set of window locations. This approach is well suited for reporting many
significant windows with rigorous error control. In contrast, our focus is on the classical
scan-statistic test based on the maximum over windows and on fast calibration of its critical
value in the two-sample NHPP setting.

Our main contribution is a computationally efficient testing procedure for two-sample NHPP data based on scan statistics that requires only one sample path to estimate reliable critical values. We use the control sample to
construct an empirical time transformation that reduces the two-sample NHPP problem to a
scan on approximately uniform points under the null. We then develop an extreme value theorem
for the resulting scan statistic and use a generalized extreme value (GEV) approximation with
an extremal index to account for window overlap dependence. This yields a practical method
for computing tail probabilities and critical values at high resolution. To complement the
asymptotic calibration and to avoid reliance on parametric baseline specification, we also
use a permutation procedure for nonparametric significance assessment in the two-sample
setting. The proposed methods are evaluated by simulation, including size and power
comparisons with the continuous testing method of \citet{Picard2018}, and are applied to synthetic 
NGS tumor/normal data used in \citet{ShenZhang2012}.

The rest of the paper is organized as follows. In Section~2, we formulate the
two-sample nonhomogeneous Poisson process model and introduce the proposed scan statistic
based on the empirical time transformation. We establish an extreme value theorem for the
scan statistic and describe maximum likelihood estimation of the GEV
distribution parameters and the extremal index. A permutation-based testing procedure is also
presented to provide a fully nonparametric calibration. Section~3 evaluates the performance
of the proposed methods through extensive simulation studies, including empirical size and
power comparisons with the continuous testing method of \citet{Picard2018}. In Section~4, we
apply the proposed methods to synthetic sequencing data for copy number variation
detection and compare the detected regions with existing approaches. Finally, Section~5
concludes with a summary and discussion.

\section{Methodology}

\subsection{Problem formulation}

Copy number variations (CNVs) refer to differences in the number of copies of specific
DNA segments within an individual's genome. These structural variants include genomic
deletions and duplications, often spanning many base pairs, and contribute substantially
to genomic diversity. While CNVs commonly occur in healthy individuals, they also play a
critical role in human disease. In particular, tumor cells frequently exhibit abnormal
copy-number patterns, known as copy number alterations (CNAs), arising from somatic
deletions or duplications that differentiate malignant genomes from their normal
counterparts. CNVs and CNAs correlate strongly with cancer initiation, progression, and
therapeutic response, making their reliable detection a central task in cancer genomics
(\citealt{Shlien2009,Joan2018}).

Similar to \cite{ShenZhang2012}, we model the normal (control) sample and tumor (case) sample as
two independent nonhomogeneous Poisson processes (NHPPs) on $[0,1]$, where the genomic
coordinate is rescaled to $[0,1]$ for convenience. Let $\{X(t),t\in[0,1]\}$ be the control process with
intensity $\lambda_0(t)$, and let $\{Y(t),t\in[0,1]\}$ be the case process with intensity $\lambda_1(t)$.
 A CNV/CNA occurs when, on some
unknown interval, the tumor intensity differs from the normal baseline. We test whether there
exists an interval of length $s$ starting at an unknown location $\tau\in[0,1-s]$ where the
two intensities are different:
\begin{equation}
H_0 : \lambda_1(t)=\rho \lambda_0(t), \quad t\in[0,1],
\quad \text{for some } \rho>0,
\label{eq1}
\end{equation}
\begin{equation}
H_1 :
\begin{cases}
	\lambda_1(t) \neq \rho \lambda_0(t), & t\in[\tau,\tau+s],\\
	\lambda_1(t) = \rho \lambda_0(t), & t\notin[\tau,\tau+s],
\end{cases}
\end{equation}
In this paper, we focus on finding local \emph{excess} tumor events (amplifications). The
same idea can be used for deletions by scanning for unusually small counts.

A classical scan statistic for detecting an excessive number of case events in a window
of size $\omega\in(0,1)$ is
\begin{equation}\label{scan_stat}
	S(\omega)=\sup_{0\le t\le 1-\omega} Y_t(\omega),
	\qquad
	Y_t(\omega)=Y(t+\omega)-Y(t),
\end{equation}
where $Y_t(\omega)$ is the number of case events in the interval $(t,t+\omega]$.

The essential inferential task is to determine whether the local intensity over
$[\tau,\tau+s]$ deviates significantly from the baseline intensity $\lambda_0(t)$.
To calibrate the scan statistic under heterogeneous baselines, we develop an extreme value
approximation for its null distribution and also provide a permutation-based alternative
for nonparametric significance assessment.

\subsection{An extreme value theorem}

Let \(X_1,\ldots,X_{n_0}\) denote the control event locations on \([0,1]\). Conditional on the
total count \(X(1)=n_0\), the event times form an i.i.d.\ sample from
\[
f_X(t)=\frac{\lambda_0(t)}{\Lambda_0(1)}, \quad t\in[0,1],
\qquad
F_X(t)=\frac{\Lambda_0(t)}{\Lambda_0(1)},
\]
where \(\Lambda_0(t)=\int_0^t \lambda_0(s)\,ds\).

Similarly, let \(Y_1,\ldots,Y_{n_1}\) denote the case event locations on \([0,1]\). Conditional
on \(Y(1)=n_1\), the event times form an i.i.d.\ sample from
\[
f_Y(t)=\frac{\lambda_1(t)}{\Lambda_1(1)}, \quad t\in[0,1],
\qquad
F_Y(t)=\frac{\Lambda_1(t)}{\Lambda_1(1)}.
\]
Unless otherwise stated, we work conditionally on $n_0$ and $n_1$. Define the probability integral transform
\begin{equation}\label{Ui_def}
	U_j = F_X(Y_j), \quad j=1,\ldots,n_1.
\end{equation}
Under \(H_0\) in \eqref{eq1}, the proportionality constant \(\rho\) cancels after
conditioning on the total counts, so \(F_Y=F_X\). Hence,
\(U_1,\ldots,U_{n_1}\) are i.i.d. \(\mathrm{Unif}(0,1)\).

Given  \(\omega_{n_1}\in(0,1)\) and a grid spacing \(\Delta_m>0\), define the
grid points by
\begin{equation}
	\label{eq:grid}
	u_i=(i-1)\Delta_m,\qquad i=1,\dots,m_{n_1},
\end{equation}
where
\[
m_{n_1}=\left\lfloor \frac{1-\omega_{n_1}}{\Delta_m}\right\rfloor+1.
\]
Then
\[
0 = u_1 < u_2 < \cdots < u_{m_{n_1}} \le 1-\omega_{n_1},
\]
so every grid window \([u_i,u_i+\omega_{n_1}]\) is contained in \([0,1]\).

Define the grid-based sliding-window count sequence by
\begin{equation}
	\label{gridcount}
	C_i(\omega_{n_1})=\sum_{j=1}^{n_1} \mathbf 1\{U_j\in [u_i,u_i+\omega_{n_1}]\},
	\qquad i=1,\dots,m_{n_1}.
\end{equation}
Since \(\omega_{n_1}\) and \(m_{n_1}\) depend on \(n_1\), we view
\(\{C_i(\omega_{n_1})\}_{i=1}^{m_{n_1}}\) 
as a row-wise  stationary triangular array.
To develop an extreme value theorem for the scan statistic in \eqref{scan_stat}, we show that
\(\{C_i(\omega_{n_1})\}_{i=1}^{m_{n_1}}\)  satisfies the regularity conditions required for extreme value theory.

The following definition and lemma can be found in \cite{textbook5}.

\begin{definition}\label{mix}
	Let \(\{S_1,S_2,\ldots,S_m\}\) be a stationary sequence. For any index sets
	\(1\le i_1<\cdots<i_s<j_1<\cdots<j_t\le m\) such that \(j_1-i_s>l\),
	define
	\begin{multline}\label{mixcond}
		\Big|
		P(S_{i_1}\le u,\ldots,S_{i_s}\le u,\, S_{j_1}\le u,\ldots,S_{j_t}\le u)
		\\
		-
		P(S_{i_1}\le u,\ldots,S_{i_s}\le u)\,
		P(S_{j_1}\le u,\ldots,S_{j_t}\le u)
		\Big|
		\le g(l).
	\end{multline}
	If \(g(l)\to 0\) as \(l\to\infty\), the sequence is said to satisfy the distributional mixing condition.
\end{definition}

\begin{lem}\label{depgevd}
	Let \(\{S_1,S_2,\ldots,S_m\}\) be a stationary sequence with marginal distribution \(F\),
	satisfying \eqref{mixcond}. Let \(\{S_1',S_2',\ldots,S_m'\}\) be an i.i.d.\ sequence with the
	same marginal distribution \(F\). Define
	\[
	M_m=\max_{1\le i\le m} S_i,
	\qquad
	M_m'=\max_{1\le i\le m} S_i'.
	\]
	If the distributional mixing condition in \eqref{mixcond} is satisfied, and there exist
	sequences \(\{a_m>0\}\) and \(\{b_m\}\) such that
	\begin{align}
		P\!\left(\frac{M_m'-b_m}{a_m}\le x\right) &\to G(x),
	\end{align}
	then
	\begin{align}
		P\!\left(\frac{M_m-b_m}{a_m}\le x\right) &\to [G(x)]^{\theta},
	\end{align}
	where \(\theta\in(0,1]\) is the \emph{extremal index}. Depending on \(F\), the limit \(G\) is a
	generalized extreme value (GEV) distribution:
	\begin{equation}\label{gevdd}
		G(x)=
		\begin{cases}
			\exp\!\left(-\left[1+\xi\left(\dfrac{x-\mu}{\sigma}\right)\right]^{-1/\xi}\right),
			& \xi\neq 0,\\[1.2ex]
			\exp\!\left(-e^{-(x-\mu)/\sigma}\right), & \xi=0,
		\end{cases}
	\end{equation}
	for \(x\) satisfying \(1+\xi(x-\mu)/\sigma>0\), where \(\mu\) is the location parameter,
	\(\sigma>0\) is the scale parameter, and \(\xi\) is the shape parameter.
\end{lem}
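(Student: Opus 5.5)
The approach is the classical Leadbetter blocking argument, which is how the result is proved in \cite{textbook5}. Fix \(x\) with \(G(x)>0\) and set \(u_m=a_mx+b_m\). Since \(P(M_m'\le u_m)=F(u_m)^m\to G(x)\), taking logarithms gives
\[
m\{1-F(u_m)\}\to \tau:=-\log G(x)\in[0,\infty).
\]
The whole proof is then about comparing \(P(M_m\le u_m)\) with \(F(u_m)^m\approx e^{-\tau}\). I read the mixing bound \eqref{mixcond} as holding at the levels \(u=u_m\), with \(g=g_m\) and \(g_m(l_m)\to0\) for some \(l_m=o(m)\), as in Leadbetter's condition \(D(u_m)\).

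\emph{Step 1 (blocking).} Choose integers \(k_m\to\infty\) with \(k_ml_m/m\to0\) and \(k_m\,g_m(l_m)\to0\); such a choice exists because \(g_m(l_m)\to0\). Split \(\{1,\dots,m\}\) into \(k_m\) consecutive blocks of length \(r_m=\lfloor m/k_m\rfloor\). From the end of each block remove a gap of length \(l_m\).
\begin{itemize}
\item Deleting the gaps changes \(P(M_m\le u_m)\) by at most \(k_ml_m\{1-F(u_m)\}\sim (k_ml_m/m)\,\tau\to0\), by a union bound. The same bound holds for the leftover \(m-k_mr_m<r_m\) indices.
\item The shortened blocks are now separated by more than \(l_m\). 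Apply \eqref{mixcond} repeatedly, splitting off one block at a time, so that the probability that all shortened blocks stay below \(u_m\) differs from the product of their individual probabilities by at most \(k_m g_m(l_m)\to0\).
\item Stationarity makes all factors equal.
\end{itemize}
Reinserting the gaps costs another \(o(1)\). The result is
\[
P(M_m\le u_m)=P(M_{r_m}\le u_m)^{k_m}+o(1).
\]

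\emph{Step 2 (reduction to one block).} Since \(P(M_{r_m}>u_m)\le r_m\{1-F(u_m)\}\sim\tau/k_m\to0\), the relation \(\log(1-p)\sim -p\) gives
\[
P(M_m\le u_m)=\exp\bigl(-k_mP(M_{r_m}>u_m)\bigr)+o(1).
\]
Define \(\theta\) through the cluster ratio
\[
\frac{P(M_{r_m}>u_m)}{r_m\{1-F(u_m)\}}\to\theta .
\]
Then
\[
k_mP(M_{r_m}>u_m)=\theta\,k_mr_m\{1-F(u_m)\}(1+o(1))\to\theta\tau,
\]
so \(P(M_m\le u_m)\to e^{-\theta\tau}=G(x)^\theta\). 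The union bound above already gives \(\theta\le1\). The identification of \(G\) as a GEV distribution of the form \eqref{gevdd} is just the Fisher--Tippett--Gnedenko theorem applied to the i.i.d.\ sequence \(S_i'\). Note that \(G^\theta\) is again GEV, with shifted location and scale and the same \(\xi\), which is what makes the calibration in the paper usable.

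\emph{Main obstacle.} The delicate point is that the cluster ratio has a limit at all, and that this limit is independent of \(x\) and positive. Distributional mixing alone does not guarantee this, so the lemma implicitly assumes the existence of an extremal index, and I would state that assumption explicitly.
\begin{itemize}
\item For independence of \(x\), I would follow Leadbetter. Apply the blocking identity at levels \(u_m(\tau)\) for two different \(\tau\) values, and use that \(P(M_m\le u_m(\tau))\) must be compatible across subsequences \(m\mapsto km\). This forces the limit to have the form \(e^{-\theta\tau}\) with a single \(\theta\).
\item For \(\theta>0\), I would use the standard bound \(P(M_{r}>u)\ge r\{1-F(u)\}-\sum_{i<j\le r}P(S_i>u,S_j>u)\) together with the fact that \(G\) is nondegenerate. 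Failing that, I would simply take \(\theta>0\) as part of the definition, as \cite{textbook5} does.
\end{itemize}
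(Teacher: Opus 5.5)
The paper gives no proof of this lemma; it only cites \cite{textbook5}. Your blocking argument is the standard Leadbetter proof behind that citation, and Steps~1--2 are correct:
the gap, leftover and reinsertion errors are $O\bigl((k_ml_m+r_m)\{1-F(u_m)\}\bigr)=o(1)$;
the telescoped mixing error is at most $k_mg_m(l_m)$, which needs the bound in \eqref{mixcond} to be free of the block sizes $s,t$, as written, because your blocks have length $r_m\to\infty$;
and $\log(1-p)\sim-p$ gives $P(M_m\le u_m)=\exp\{-k_mP(M_{r_m}>u_m)\}+o(1)$.
Reading \eqref{mixcond} as Leadbetter's $D(u_m)$ is the right repair of the printed definition. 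You are also right that existence of the limiting cluster ratio is an extra hypothesis hidden in the words ``the extremal index''.

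The one step that would fail is your Bonferroni route to $\theta>0$. Positivity does not follow from $D(u_m)$ plus a nondegenerate $G$. Take $S_i\equiv S_1$ with $F$ standard exponential, $a_m=1$ and $b_m=\log m$, so $G$ is Gumbel. At level $u_m$ the left side of \eqref{mixcond} equals $F(u_m)\{1-F(u_m)\}\le 1-F(u_m)\sim\tau/m\to0$ for every $l$. Yet $P(M_m\le u_m)=F(u_m)\to1$, so $\theta=0$. Correspondingly, your pair sum is $\binom{r}{2}\{1-F(u_m)\}$, which swamps $r\{1-F(u_m)\}$. So $\theta>0$ must be assumed, and your fallback is mandatory rather than optional.

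This is not pedantry here. In the regime of Theorem~\ref{thm:gev-grid-scan}, $n_1^2\Delta_m\to0$ implies that, with probability tending to one, the jump points $U_j$ and $U_j-\omega_{n_1}$ of the window-count process are pairwise more than $L\Delta_m$ apart for some $L=L_{n_1}\to\infty$. Hence, up to negligible boundary terms, exceedances of $b_{n_1}+a_{n_1}x$ by the grid counts come in runs of length at least of order $L$. Meanwhile their expected number $m_{n_1}\{1-F_{n_1}(b_{n_1}+a_{n_1}x)\}$ tends to $e^{-x}$ under the paper's normalization. Thus $P\{S_m(\omega_{n_1})\le b_{n_1}+a_{n_1}x\}\to1$, i.e.\ $\theta=0$ for the very array the paper applies this lemma to. Note also that your $m\mapsto km$ argument for $x$-independence needs a single infinite sequence, which a row-wise stationary triangular array is not.
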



\begin{lem}\label{stationary}
	Assume that \(H_0\) holds and that the true null distribution function is used
	in the transformation. Conditional on \(n_1\), let
	\(U_1,\ldots,U_{n_1}\) be i.i.d.\ \(\mathrm{Unif}(0,1)\). Given
	\(\omega\in(0,1)\), define
	\[
	C_i(\omega)=\sum_{j=1}^{n_1}
	\mathbf 1\{U_j\in [u_i,u_i+\omega]\},
	\qquad i=1,\ldots,m_{n_1},
	\]
	where \(u_i=(i-1)\Delta_m\) and each interval
	\([u_i,u_i+\omega]\) is contained in \([0,1]\). Then the sequence
	\(\{C_i(\omega)\}_{i=1}^{m_{n_1}}\) is strictly stationary in the finite-index sense:
	for any \(k\ge 1\), any indices \(i_1<\cdots<i_k\), and any integer shift \(r\)
	such that \(1\le i_1+r<\cdots<i_k+r\le m_{n_1}\),
	\[
	(C_{i_1}(\omega),\ldots,C_{i_k}(\omega))
	\overset{d}{=}
	(C_{i_1+r}(\omega),\ldots,C_{i_k+r}(\omega)).
	\]
\end{lem}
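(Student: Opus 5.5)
The plan is to represent the count vector $(C_{i_1}(\omega),\ldots,C_{i_k}(\omega))$ as a deterministic function of the multinomial cell counts of a partition of $[0,1]$. Then I will show that this partition, and hence its cell probabilities, is preserved under translation by $r\Delta_m$. Write $I_i=[u_i,u_i+\omega]$. For the fixed indices $i_1<\cdots<i_k$, consider the finitely many "atoms" $A_\varepsilon=\bigcap_{l=1}^k I_{i_l}^{\varepsilon_l}$ for $\varepsilon\in\{0,1\}^k$, where $I^1=I$ and $I^0=[0,1]\setminus I$. These atoms partition $[0,1]$, and each $C_{i_l}(\omega)=\sum_{\varepsilon:\varepsilon_l=1}N_\varepsilon$, where $N_\varepsilon=\sum_{j}\mathbf 1\{U_j\in A_\varepsilon\}$. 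Since the $U_j$ are i.i.d.\ Unif$(0,1)$, the vector $(N_\varepsilon)_\varepsilon$ is multinomial$(n_1;(|A_\varepsilon|)_\varepsilon)$, where $|\cdot|$ is Lebesgue measure. So the joint law of $(C_{i_1},\ldots,C_{i_k})$ depends only on the vector of atom lengths $(|A_\varepsilon|)_\varepsilon$.

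Next I compare the atoms for the shifted indices $i_l+r$. Since $u_{i+r}=u_i+r\Delta_m$, we have $I_{i_l+r}=I_{i_l}+r\Delta_m$. For every $\varepsilon\neq 0$, the atom $A_\varepsilon$ lies inside some $I_{i_l}$. Moreover, the shifted atom $A'_\varepsilon$ equals $A_\varepsilon+r\Delta_m$: for a point $x$ in the union of the relevant windows, membership of $x$ in $I_{i_l}$ is equivalent to membership of $x+r\Delta_m$ in $I_{i_l+r}$. The hypothesis $1\le i_1+r$ and $i_k+r\le m_{n_1}$ guarantees that all shifted windows stay inside $[0,1]$, so no mass is truncated at the boundary. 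Hence $|A'_\varepsilon|=|A_\varepsilon|$ for all $\varepsilon\ne 0$. The complementary atom $\varepsilon=0$ then has the same length $1-\sum_{\varepsilon\neq0}|A_\varepsilon|$ in both configurations. The two multinomial vectors therefore have identical parameters, and applying the same linear map to both gives the claimed equality in distribution.

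The only step requiring care is the translation claim for the atoms with $\varepsilon\ne0$. These atoms involve complements relative to $[0,1]$, not relative to $\mathbb R$. I would handle this by working inside $\mathbb R$: define $A_\varepsilon$ with complements taken in $\mathbb R$, and note that for $\varepsilon\neq0$ the atom is contained in $\bigcup_l I_{i_l}\subset[0,1]$. The two definitions therefore agree, and translation invariance of Lebesgue measure on $\mathbb R$ finishes the argument. Because the windows are closed intervals, endpoint effects have measure zero and do not matter.
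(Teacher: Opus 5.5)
Your proposal is correct and follows essentially the same route as the paper. Both arguments write the count vector as a function of multinomial cell counts over the partition generated by the windows, then note that the common translation by $r\Delta_m$, with all windows staying inside $[0,1]$, preserves every cell length. Your handling of the complementary atom, and of complements taken in $\mathbb{R}$ rather than in $[0,1]$, makes explicit a detail the paper leaves implicit.
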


\begin{proof}
	For fixed intervals \(A_1,\ldots,A_k\), the vector
	\[
	\left(\sum_{j=1}^{n_1}\mathbf 1\{U_j\in A_1\},\ldots,
	\sum_{j=1}^{n_1}\mathbf 1\{U_j\in A_k\}\right)
	\]
	has a multinomial distribution over the finite partition generated by
	\(A_1,\ldots,A_k\). Hence, its distribution is determined by the
	lengths of all intersections and set differences generated by these intervals.
	
	Now take
	\[
	A_\ell=[u_{i_\ell},u_{i_\ell}+\omega],
	\qquad
	A_\ell^{(r)}=[u_{i_\ell+r},u_{i_\ell+r}+\omega].
	\]
	Since \(u_{i_\ell+r}=u_{i_\ell}+r\Delta_m\), the shifted collection
	\(\{A_\ell^{(r)}\}_{\ell=1}^k\) is obtained from
	\(\{A_\ell\}_{\ell=1}^k\) by a common translation. Because all intervals remain
	inside \([0,1]\), this translation preserves the length of every intersection
	and every cell in the partition generated by the intervals. Therefore, the two
	multinomial cell-probability vectors are identical, and the two count vectors
	have the same distribution.
\end{proof}

 \begin{lem}\label{lem:mixing}
 	Assume
 	\[
 	n_1\omega_{n_1}\to\infty,
 	\qquad
 	n_1\omega_{n_1}^2\to0.
 	\]
 	Then \(\{C_i(\omega_{n_1})\}_{i=1}^{m_{n_1}}\) satisfies the asymptotic distributional mixing
 	condition in \eqref{mixcond}.
 \end{lem}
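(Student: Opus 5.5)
The plan is to compare the multinomial sliding-window counts with a Poissonized version. In that version, counts over disjoint windows are exactly independent, so the mixing bound reduces to controlling the cost of de-Poissonization.

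\emph{Step 1: Poissonization.} Let \(N\sim\mathrm{Poisson}(n_1)\) be independent of everything else. The points \(U_1,\dots,U_N\) then form a homogeneous Poisson process on \([0,1]\) with rate \(n_1\). Denote its window counts by \(\tilde C_i\). For index sets with \(j_1-i_s>l\) and \(l\Delta_m\ge\omega_{n_1}\), the windows indexed by the \(i\)'s and by the \(j\)'s lie in disjoint intervals \([0,u_{i_s}+\omega_{n_1}]\) and \([u_{j_1},1]\). By independent increments,
\[
P(\tilde C_{i_1}\le u,\dots,\tilde C_{j_t}\le u)=P(\tilde C_{i_1}\le u,\dots,\tilde C_{i_s}\le u)\,P(\tilde C_{j_1}\le u,\dots,\tilde C_{j_t}\le u)
\]
holds exactly. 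So the Poissonized array satisfies \eqref{mixcond} with \(g(l)=0\) once \(l>\omega_{n_1}/\Delta_m\).

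\emph{Step 2: de-Poissonization.} Here we must bound, uniformly over index sets, the difference between the joint events for \(C\) and for \(\tilde C\). Conditional on \(N=n_1\), the two arrays coincide. Alternatively, we can use the multinomial structure directly: the vector \((C\text{ on left block},\,C\text{ on right block},\,\text{rest})\) is multinomial. The left-block counts are a function of the points in \(L=[0,u_{i_s}+\omega_{n_1}]\), and given \(K_L=k\) those points are i.i.d.\ uniform on \(L\); the same holds for the right block \(R\).

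The dependence between the blocks therefore enters only through the joint law of \((K_L,K_R)\), which is multinomial rather than independent binomial. I would bound
\[
\bigl|P(A\cap B)-P(A)P(B)\bigr|\le d_{TV}\bigl(\mathcal L(K_L,K_R),\,\mathcal L(K_L)\otimes\mathcal L(K_R)\bigr),
\]
where \(A\) and \(B\) are the left and right events. I then need this total variation distance to tend to zero. This is the main obstacle, since for large blocks it does not vanish: with \(|L|=|R|=1/2\) it is of constant order.

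\emph{Step 3: using the event structure.} To get around this, I would exploit the fact that the events concern maxima at the relevant high level \(u=u_{n_1}\), where \(P(C_i>u)\) is of order \(1/m_{n_1}\). The left event depends on \(K_L\) only through the local configuration: \(P(A\mid K_L=k)\) varies slowly in \(k\) over the window \(k=n_1|L|\pm O(\sqrt{n_1})\). Indeed, changing \(k\) by \(O(\sqrt{n_1})\) shifts each window's mean by \(O(\sqrt{n_1}\,\omega_{n_1})\), which tends to \(0\) precisely because \(n_1\omega_{n_1}^2\to0\).

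Hence the functions \(k\mapsto P(A\mid K_L=k)\) and \(k\mapsto P(B\mid K_R=k)\) are asymptotically constant on the bulk of the distributions of \(K_L\) and \(K_R\). Consequently
\[
E\bigl[P(A\mid K_L)\,P(B\mid K_R)\bigr]-E\bigl[P(A\mid K_L)\bigr]\,E\bigl[P(B\mid K_R)\bigr]\to0,
\]
with Chebyshev tail bounds on \(|K_L-n_1|L||\) handling the remainder. The condition \(n_1\omega_{n_1}\to\infty\) is used to place the level \(u\) in the Gaussian or moderate-deviation regime for the window counts. It also ensures that the number of windows affected per block is controlled, so that the sensitivity estimates aggregate to a bound \(g_{n_1}(l_{n_1})\to0\) for some \(l_{n_1}=o(m_{n_1})\). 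This is the asymptotic form of \eqref{mixcond} needed for Lemma \ref{depgevd}.
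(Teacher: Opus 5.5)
Your Step~3 has a gap: the slow-variation claim is justified only for long blocks. Given $K_L=k$, each left window count is $\mathrm{Bin}(k,\omega_{n_1}/|L|)$, and $K_L$ fluctuates on the scale $\sqrt{n_1|L|}$. Over the bulk, the window mean therefore moves by about $\sqrt{n_1|L|}\,\omega_{n_1}/|L|=\sqrt{n_1\omega_{n_1}^2/|L|}$, not $\sqrt{n_1}\,\omega_{n_1}$. This vanishes only when $|L|\gg n_1\omega_{n_1}^2$. Since $L=[0,u_{i_s}+\omega_{n_1}]$ can be as short as $\omega_{n_1}$ (and likewise $R$), the shift can be of order $\sqrt{n_1\omega_{n_1}}\to\infty$. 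In that case nothing in your argument makes $k\mapsto P(A\mid K_L=k)$ asymptotically constant on the bulk.

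The missing piece is the complementary regime, and your own Step~2 supplies it. Pick $\epsilon_{n_1}\to0$ with $n_1\omega_{n_1}^2/\epsilon_{n_1}\to0$.
\begin{itemize}
\item If $|L|\ge\epsilon_{n_1}$, slow variation of $a(k)=P(A\mid K_L=k)$ alone suffices, because $|\mathrm{Cov}(a(K_L),b(K_R))|\le 2E|a(K_L)-a_0|$ for any constant $a_0$.
\item If $|R|\ge\epsilon_{n_1}$, argue symmetrically with $b$.
\item If both are shorter, compare $\mathcal L(K_L,K_R)$ and $\mathcal L(K_L)\otimes\mathcal L(K_R)$ with $\mathrm{Poi}(n_1|L|)\otimes\mathrm{Poi}(n_1|R|)$. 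This gives total variation at most $2(|L|+|R|)<4\epsilon_{n_1}$.
\end{itemize}

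The aggregation step should also be made explicit. Passing from $k$ to $k'>k$ amounts to adding $k'-k$ uniform points to $L$. Hence $0\le a(k)-a(k')\le\sum_i P(C_i^{(k)}\le u<C_i^{(k)}+D_i)$ with $D_i\sim\mathrm{Bin}(k'-k,\omega_{n_1}/|L|)$. Bounding this uniformly over the bulk requires comparing $P(C^{(k)}\ge u-j)$ with $P(C_i>u)\asymp 1/m_{n_1}$ uniformly in $k$. That is a local tail-ratio (moderate-deviation) estimate at $u_{n_1}$, which needs a growth restriction on $m_{n_1}$ such as the condition $\log m_{n_1}=o((n_1\omega_{n_1})^{1/3})$ of Theorem~\ref{thm:gev-grid-scan}. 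The assumption $n_1\omega_{n_1}\to\infty$ alone does not place $u_{n_1}$ in such a regime, so your route proves the lemma only under that extra hypothesis. Restricting to the levels $u_{n_1}$ is otherwise legitimate for Lemma~\ref{depgevd}.

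For comparison, the paper's proof is essentially your Step~1 followed by a total variation transfer. It does this for fixed block sizes $s,t$, and it uses the at most $2(s+t)$ cells cut out by the windows themselves rather than the whole segments $L$ and $R$, which can be macroscopic even when $s=t=1$. Those cells have total mass $O((s+t)\omega_{n_1})$, so the multinomial-to-Poisson comparison gives $|P(X\le u,Y\le u)-P(X\le u)P(Y\le u)|\le C_{s,t}n_1\omega_{n_1}^2$ at every level $u$. In this fixed-$(s,t)$ reading your Step~2 obstacle never arises.

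The obstacle is real, however, once $s,t$ are allowed to grow with $n_1$. That is what Definition~\ref{mix} (``any index sets'') and Leadbetter's $D(u_n)$ behind Lemma~\ref{depgevd} actually ask for, and the paper's constant $C_{s,t}$ does not control it. So your conditioning-on-block-totals route, once the short-block case and the level condition are added, yields a stronger, uniform statement than the paper's argument. The cost is that it works only at the extreme levels and under an additional growth condition on $m_{n_1}$.
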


 \begin{proof}
 	Let
 	\[
 	X=(C_{i_1},\ldots,C_{i_s}),
 	\qquad
 	Y=(C_{j_1},\ldots,C_{j_t}),
 	\]
 	where \(1\le i_1<\cdots<i_s<j_1<\cdots<j_t\le m_{n_1}\). Choose
	\[
l=\left\lceil \frac{\omega_{n_1}}{\Delta_m}\right\rceil+1.
\]
 	If \(j_1-i_s>l\), then
 	\[
 	u_{j_1}-u_{i_s}
 	=
 	(j_1-i_s)\Delta_m
 	>
 	\omega_{n_1},
 	\]
 	so every window in the first block is disjoint from every window in the second block.
 	
 	Let \(N\sim\mathrm{Poisson}(n_1)\), independent of the \(U_r\)'s, and define
 	\[
 	C_i^*(\omega_{n_1})
 	=
 	\sum_{r=1}^{N}
 	\mathbf 1\{U_r\in [u_i,u_i+\omega_{n_1}]\}.
 	\]
 	In the Poissonized model, counts over disjoint intervals are independent. Hence
 	\[
 	P^*(X\le u,Y\le u)
 	=
 	P^*(X\le u)P^*(Y\le u).
 	\]
 	
 	The windows involved in \(X\) and \(Y\) generate at most \(K\le 2(s+t)\) disjoint
 	subintervals. Let their probabilities be \(p_1,\ldots,p_K\). Since each subinterval
 	has length at most \(\omega_{n_1}\),
 	\[
 	\sum_{k=1}^K p_k^2
 	\le
 	K\omega_{n_1}^2.
 	\]
 	Let \((Z_1,\ldots,Z_K)\) be the corresponding multinomial counts under the
 	fixed-\(n_1\) model, and let \((Z_1^*,\ldots,Z_K^*)\) be the corresponding independent
 	Poisson counts under the Poissonized model. By the multinomial-to-Poisson total
 	variation bound (see, e.g. \cite{Le_Cam_1960} and \cite{Deheuvels_1988}),
 	\[
 	d_{\mathrm{TV}}
 	\{(Z_1,\ldots,Z_K),(Z_1^*,\ldots,Z_K^*)\}
 	\le
 	2n_1\sum_{k=1}^K p_k^2
 	\le
 	2K n_1\omega_{n_1}^2.
 	\]
 	Since \(X\), \(Y\), and \((X,Y)\) are functions of these cell counts, the same bound
 	applies to their joint and marginal distributions. Therefore, using the independence
 	of \(X\) and \(Y\) under the Poissonized model,
 	\[
 	\left|
 	P(X\le u,Y\le u)-P(X\le u)P(Y\le u)
 	\right|
 	\le
 	C_{s,t}n_1\omega_{n_1}^2,
 	\]
 	where \(C_{s,t}>0\) depends only on the fixed block sizes \(s\) and \(t\). Since
 	\[
 	n_1\omega_{n_1}^2\to0,
 	\]
 	we have
 	\[
 	g(l)\le C_{s,t}n_1\omega_{n_1}^2\to0.
 	\]
 	Thus the asymptotic distributional mixing condition follows.
 \end{proof}

 Next, the grid-based scan statistic is defined by
 \begin{equation}
 	\label{gridscanstat}
 	S_m(\omega_{n_1})=\max_{1\le i\le m_{n_1}} C_i(\omega_{n_1}),
 \end{equation}
 whereas the classic scan statistic is defined by
 \begin{equation}
 	\label{transscanstat}
 	S(\omega_{n_1})
 	=
 	\sup_{0\le u\le 1-\omega_{n_1}}
 	\sum_{j=1}^{n_1}\mathbf 1\{U_j\in [u,u+\omega_{n_1}]\}.
 \end{equation}
 
 \cite{WU2017} proved that the related discrete grid-based scan statistic converges to the continuous one in distribution. We need a stronger result in the following lemma.

 \begin{lem}\label{lem4}
	If $n_1^2\Delta_m\to0$ and $\Delta_m = o(\omega_{n_1})$, then
	\[
	P\{S_m(\omega_{n_1})=S(\omega_{n_1})\}\to1.
	\]
\end{lem}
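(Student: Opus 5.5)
Since every grid window is one of the windows in the supremum \eqref{transscanstat}, we always have \(S_m(\omega_{n_1})\le S(\omega_{n_1})\). It therefore suffices to show that \(P\{S_m(\omega_{n_1})<S(\omega_{n_1})\}\to0\).

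The plan is to locate where the continuous supremum is attained. As \(u\) varies, the count \(\sum_j\mathbf 1\{U_j\in[u,u+\omega_{n_1}]\}\) is piecewise constant in \(u\). It can only increase when the left endpoint passes a point \(U_j-\omega_{n_1}\) or the right endpoint reaches some \(U_j\). Hence the supremum is attained on an interval \(I^*\subset[0,1-\omega_{n_1}]\) of values of \(u\) of the following form: the left endpoint of \(I^*\) is \(\max(0,U_{(a)}-\omega_{n_1})\) for some right-endpoint-type event, and the right endpoint of \(I^*\) is \(\min(1-\omega_{n_1},U_{(b)})\), where \(U_{(b)}\) is the left-most point in the optimal window. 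Every \(u\in I^*\) gives a window containing the same maximal set of points, because the window is closed.

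The grid has spacing \(\Delta_m\), and \(\Delta_m=o(\omega_{n_1})\) guarantees that \(m_{n_1}\ge 2\) and that the grid covers \([0,1-\omega_{n_1}]\) up to a gap of at most \(\Delta_m\) at the right end. So the grid fails to hit \(I^*\) only if \(|I^*|<\Delta_m\), apart from boundary effects near \(u=1-\omega_{n_1}\). The length \(|I^*|\) is a difference of the form \(U_k-(U_j-\omega_{n_1})\), or a distance to the boundary \(0\) or \(1\). Therefore
\[
P\{S_m<S\}\le P\Big(\exists\, j\neq k:\ |U_k-U_j+\omega_{n_1}|<\Delta_m\Big)+P\Big(\exists\, j:\ U_j\in[0,\Delta_m)\cup(1-\Delta_m,1]\ \text{or}\ |U_j-(1-\omega_{n_1})|<\Delta_m \text{ etc.}\Big).
\]
For i.i.d. uniforms, each pair \((j,k)\) has probability at most \(2\Delta_m\) of satisfying \(|U_k-U_j+\omega_{n_1}|<\Delta_m\), since \(U_k-U_j\) has density bounded by \(1\). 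A union bound over \(n_1^2\) pairs gives \(O(n_1^2\Delta_m)\to0\). The boundary events cost \(O(n_1\Delta_m)\to0\).

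The main obstacle is making the first step rigorous. I need to show that on the complement of these near-coincidence events, the optimal set of \(u\) really contains an interval of length at least \(\Delta_m\) that meets the grid. This includes the right-end case, where \(1-\omega_{n_1}\) may not be a grid point. I will handle it by taking an optimal window and sliding it: left until a point leaves or enters, and right until the left-most point exits. I will then argue that the two stopping positions are separated by a gap of the form \(U_k-U_j+\omega_{n_1}\) or a boundary distance, each of which exceeds \(\Delta_m\) off the bad event. Since \(u_1=0\) and consecutive grid points differ by \(\Delta_m\), an interval of length at least \(\Delta_m\) inside \([0,1-\omega_{n_1}]\) whose right end is at most \(u_{m_{n_1}}+\Delta_m\) contains a grid point. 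At the extreme right, I restrict to intervals ending before \(u_{m_{n_1}}\) or use the boundary event.
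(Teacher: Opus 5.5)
Your argument is correct, but it takes a different route from the paper. The paper never describes the set of maximizing positions. It quotes the deterministic sandwich $S_m(\omega)\le S(\omega)\le S_m(\omega+2\Delta_m)$ from Lemma 3.1 of Fu, Wu, and Lou (2012) and applies it at length $\omega_{n_1}-2\Delta_m$ to get $S(\omega_{n_1}-2\Delta_m)\le S_m(\omega_{n_1})\le S(\omega_{n_1})$. It then notes that the continuous scan can change when the window shrinks by $2\Delta_m$ only if some pair satisfies $\omega_{n_1}-2\Delta_m<|U_i-U_j|\le\omega_{n_1}$, and the same triangular-density union bound gives $2n_1^2\Delta_m$. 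So both proofs end at the same pairwise near-coincidence event. The paper reaches it through monotonicity of the continuous scan in the window length. You reach it by writing the argmax set directly as $I^*=[\max_J U_j-\omega_{n_1},\,\min_J U_j]\cap[0,1-\omega_{n_1}]$ for an optimal point set $J$.

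The paper's route is shorter, but it depends on an external lemma whose grid convention must match the one used here. With windows confined to $[0,1]$, the last grid window ends at $u_{m_{n_1}}+\omega_{n_1}$, which can be strictly less than $1$. Then $S(\omega_{n_1}-2\Delta_m)\le S_m(\omega_{n_1})$ can fail when a point lies in $(u_{m_{n_1}}+\omega_{n_1},1]$; a single point there is an example. Your direct argument catches exactly this case: $I^*$ is truncated at $1-\omega_{n_1}$ and $|I^*|=1-\max_J U_j<\Delta_m$, an event of probability at most $n_1\Delta_m\to0$. So your version is self-contained and also covers a boundary term the paper passes over.

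Three small points would tidy your write-up.
\begin{itemize}
\item Any closed subinterval of $[0,1-\omega_{n_1}]$ of length at least $\Delta_m$ contains a grid point, because $u_1=0$ and $1-\omega_{n_1}-u_{m_{n_1}}<\Delta_m$.
\item Truncation of $I^*$ at $0$ is harmless, since then $0\in I^*$. The only bad events are therefore a spread $\max_J U_j-\min_J U_j\in(\omega_{n_1}-\Delta_m,\omega_{n_1}]$ and a point in $(1-\Delta_m,1]$. Your extra events near $0$ and near $1-\omega_{n_1}$, and the claim $m_{n_1}\ge2$, are not needed.
\item Your union over $j\ne k$ does not cover an optimal set with $|J|=1$, where $|I^*|$ before truncation equals $\omega_{n_1}$. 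This is where $\Delta_m=o(\omega_{n_1})$ is actually used, to guarantee $\omega_{n_1}\ge\Delta_m$.
\end{itemize}
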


\begin{proof}
	By Lemma 3.1 of Fu, Wu, and Lou (2012), with grid spacing \(\Delta_m\),
	\[
	S_m(\omega)\le S(\omega)\le S_m(\omega+2\Delta_m).
	\]
	Applying the same inequality to \(\omega-2\Delta_m\), we obtain
	\[
	S(\omega-2\Delta_m)\le S_m(\omega)\le S(\omega).
	\]
	Hence, with \(\omega=\omega_{n_1}\),
	\[
	0\le S(\omega_{n_1})-S_m(\omega_{n_1})
	\le
	S(\omega_{n_1})-S(\omega_{n_1}-2\Delta_m).
	\]
	Therefore,
	\[
	P\{S_m(\omega_{n_1})\ne S(\omega_{n_1})\}
	\le
	P\{S(\omega_{n_1})\ne S(\omega_{n_1}-2\Delta_m)\}.
	\]
	
	It remains to show that the probability on the right tends to zero. Let
	\[
	U_{(1)}<\cdots<U_{(n_1)}
	\]
	be the order statistics. If
	\[
	S(\omega_{n_1})\ne S(\omega_{n_1}-2\Delta_m),
	\]
	then increasing the window length from \(\omega_{n_1}-2\Delta_m\) to
	\(\omega_{n_1}\) must include at least one additional point in some scan window.
	Equivalently, there must exist two sample points whose spacing lies within
	\(2\Delta_m\) of \(\omega_{n_1}\). Hence,
	\[
	\{S(\omega_{n_1})\ne S(\omega_{n_1}-2\Delta_m)\}
	\subset
	\left\{
	\exists\, 1\le i<j\le n_1:
	\omega_{n_1}-2\Delta_m
	<
	|U_i-U_j|
	\le
	\omega_{n_1}
	\right\}.
	\]
	
	For independent \(U_i,U_j\sim \mathrm{Unif}(0,1)\), the density of
	\(|U_i-U_j|\) is
	\[
	f(d)=2(1-d),\qquad 0<d<1,
	\]
	so \(f(d)\le2\). Therefore,
	\[
	P\left(
	\omega_{n_1}-2\Delta_m
	<
	|U_i-U_j|
	\le
	\omega_{n_1}
	\right)
	\le
	4\Delta_m.
	\]
	By the union bound,
	\[
	\begin{aligned}
		P\{S_m(\omega_{n_1})\ne S(\omega_{n_1})\}
		&\le
		\sum_{1\le i<j\le n_1}
		P\left(
		\omega_{n_1}-2\Delta_m
		<
		|U_i-U_j|
		\le
		\omega_{n_1}
		\right) \\
		&\le
		4{n_1\choose 2}\Delta_m \\
		&\le
		2n_1^2\Delta_m.
	\end{aligned}
	\]
	Since
	\[
	n_1^2\Delta_m\to0,
	\]
	we have
	\[
	P\{S_m(\omega_{n_1})\ne S(\omega_{n_1})\}\to0.
	\]
	Thus
	\[
	P\{S_m(\omega_{n_1})=S(\omega_{n_1})\}\to1.
	\]
	
\end{proof}

Now, we are ready to prove the main theorem.

\begin{theorem}
	\label{thm:gev-grid-scan} 	
	Assume that $H_0$ holds, and let $U_1,\dots,U_{n_1} \stackrel{\text{i.i.d.}}{\sim} \mathrm{Unif}(0,1)$.
	Let the grid points be
	\[
	u_i=(i-1)\Delta_m,\qquad i=1,\dots,m_{n_1}.
	\]
	Assume
	\[
	\omega_{n_1}\to0,\qquad
	n_1\omega_{n_1}\to\infty,\qquad
	n_1\omega_{n_1}^2\to0,
	\]
	\[
	n_1^2\Delta_m\to0,
	\qquad
	\log m_{n_1}=o\left((n_1\omega_{n_1})^{1/3}\right),
	\]
	where
	\[
	m_{n_1}=
	\left\lfloor\frac{1-\omega_{n_1}}{\Delta_m}\right\rfloor+1.
	\]
	Let $C_i(\omega_{n_1})$ and $S_m(\omega_{n_1})$ be defined as in (\ref{gridcount}) and (\ref{gridscanstat}), respectively.
	
	Then there exist normalizing constants $\{a_{n_1}>0\}$ and $\{b_{n_1}\}$ such that, as $n_1\to\infty$,
	\[
	\Pr\!\left(\frac{S_m(\omega_{n_1})-b_{n_1}}{a_{n_1}}\le x\right)
	\longrightarrow
	[G(x)]^\theta,
	\qquad x\in\mathbb R,
	\]
	for some $\theta\in(0,1]$. Here, $G(x)$ is the GEV distribution in \eqref{gevdd} with $\xi=0$, namely the Gumbel distribution.
	
	Moreover,
	\[
	\Pr\!\left(\frac{S(\omega_{n_1})-b_{n_1}}{a_{n_1}}\le x\right)
	\longrightarrow
	[G(x)]^\theta,
	\qquad x\in\mathbb R.
	\]
\end{theorem}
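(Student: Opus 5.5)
The plan is to assemble the theorem from Lemmas~\ref{depgevd}--\ref{lem4} in three stages: stationarity and mixing of the count sequence, a Gumbel limit for an i.i.d.\ comparison sequence, and a transfer from the grid scan to the continuous scan.

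\textbf{Stage 1: stationarity and mixing.} By Lemma~\ref{stationary}, under $H_0$ the array $\{C_i(\omega_{n_1})\}_{i=1}^{m_{n_1}}$ is row-wise stationary. Its common marginal is $F_{n_1}=\mathrm{Bin}(n_1,\omega_{n_1})$. By Lemma~\ref{lem:mixing}, the assumptions $n_1\omega_{n_1}\to\infty$ and $n_1\omega_{n_1}^2\to0$ give the distributional mixing condition \eqref{mixcond}. So the structural hypotheses of Lemma~\ref{depgevd} hold. What remains is to produce constants $a_{n_1},b_{n_1}$ for the i.i.d.\ comparison sequence $C_1',\dots,C_{m_{n_1}}'\sim F_{n_1}$ with a Gumbel limit.

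\textbf{Stage 2: Gumbel limit for the i.i.d.\ maximum (main obstacle).} This is the step I expect to be hardest, because the marginal is discrete and changes with $n_1$.

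First, I would standardize: set $Z=(C'-n_1\omega_{n_1})/\sqrt{n_1\omega_{n_1}(1-\omega_{n_1})}$.

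Second, I would use a Cramér-type moderate-deviation result for binomial sums. It gives $P(Z>z)=\bar\Phi(z)(1+o(1))$ uniformly in $0\le z=o((n_1\omega_{n_1})^{1/6})$.

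Third, the relevant level is $z\approx\sqrt{2\log m_{n_1}}$. The assumption $\log m_{n_1}=o((n_1\omega_{n_1})^{1/3})$ is exactly what puts this $z$ inside that range.

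Fourth, I would take the standard normal constants, $\beta_m=\sqrt{2\log m}-\tfrac{\log\log m+\log 4\pi}{2\sqrt{2\log m}}$ and $\alpha_m=1/\sqrt{2\log m}$. Then set
\[
b_{n_1}=n_1\omega_{n_1}+\sqrt{n_1\omega_{n_1}(1-\omega_{n_1})}\,\beta_{m_{n_1}},\qquad a_{n_1}=\sqrt{n_1\omega_{n_1}(1-\omega_{n_1})}\,\alpha_{m_{n_1}}.
\]
The goal is to verify $m_{n_1}P(C'>a_{n_1}x+b_{n_1})\to e^{-x}$, from which $P(\max C'\le a_{n_1}x+b_{n_1})\to\exp(-e^{-x})$ follows.

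The delicate point is lattice effects. The support has spacing $1$, while the normalized spacing is $1/a_{n_1}$. So I must check that $a_{n_1}\to\infty$, which means $n_1\omega_{n_1}/\log m_{n_1}\to\infty$. This follows from the $o((n_1\omega_{n_1})^{1/3})$ condition. Then the jump $P(C'=k)/P(C'\ge k)\approx z/\sqrt{n_1\omega_{n_1}}$ tends to $0$, so rounding $a_{n_1}x+b_{n_1}$ to an integer is negligible. This is a Leadbetter-type condition for discrete arrays.

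If the moderate-deviation route is awkward, my fallback is to bound the binomial tail directly. I would do this via the Poisson approximation with $d_{TV}\le n_1\omega_{n_1}^2\to0$, but this error is only additive. Because tails of size $1/m$ are needed, a relative-error local CLT for the Poisson law would be necessary instead. I would therefore prefer Cramér's theorem applied to the binomial directly.

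Given this, Lemma~\ref{depgevd} yields $P((S_m-b_{n_1})/a_{n_1}\le x)\to[G(x)]^\theta$ with $G$ Gumbel and some $\theta\in(0,1]$. I would also note that Lemma~\ref{depgevd} in the triangular-array setting needs a mixing rate $g(l)$ uniform in the block sizes. In applying it, I would use the standard blocking argument with $k_n$ blocks, needing $k_n g(l_n)\to0$ and $k_nl_n=o(m)$. I would adopt this as given from \cite{textbook5} together with Lemma~\ref{lem:mixing}.

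\textbf{Stage 3: transfer to the continuous scan.} By Lemma~\ref{lem4}, the assumptions $n_1^2\Delta_m\to0$ and $\Delta_m=o(\omega_{n_1})$ give $P\{S_m\neq S\}\to0$. The condition $\Delta_m=o(\omega_{n_1})$ holds because $n_1^2\Delta_m\to0$ while $\omega_{n_1}\gg1/n_1$. Hence $|P((S-b)/a\le x)-P((S_m-b)/a\le x)|\le P(S\ne S_m)\to0$, which gives the second conclusion of the theorem.
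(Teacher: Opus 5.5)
Your three stages are the paper's own proof: the same moderate-deviation constants $b_{n_1}=n_1\omega_{n_1}+\sigma_{n_1}d_{m_{n_1}}$ with $\sigma_{n_1}^2=n_1\omega_{n_1}(1-\omega_{n_1})$, the same scale $a_{n_1}\asymp\sigma_{n_1}/\sqrt{2\log m_{n_1}}$, the same appeal to Lemmas~\ref{stationary}, \ref{lem:mixing} and~\ref{depgevd}, and the same transfer via Lemma~\ref{lem4}. That route has a genuine gap, which the paper shares: the step ``Lemma~\ref{depgevd} yields $[G(x)]^\theta$ with some $\theta\in(0,1]$'' fails for these constants.

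Stationarity plus distributional mixing only give $\theta\in[0,1]$. A positive extremal index is an extra hypothesis, and here it degenerates, because the grid is far finer than the point spacing. An upcrossing $C_i\le u<C_{i+1}$ requires one of the $n_1$ points to lie in the strip $(u_i+\omega_{n_1},u_{i+1}+\omega_{n_1}]$ of length $\Delta_m$ while window $i+1$ exceeds $u$. Hence, for $u=b_{n_1}+a_{n_1}x$,
\[
P\{S_m(\omega_{n_1})>u\}\le\{1-F_{n_1}(u)\}+(m_{n_1}-1)\,n_1\Delta_m\,P\{\mathrm{Bin}(n_1-1,\omega_{n_1})>u-1\}\le m_{n_1}\{1-F_{n_1}(u-1)\}\bigl(m_{n_1}^{-1}+n_1\Delta_m\bigr)\to0 .
\]
The limit holds because $m_{n_1}\{1-F_{n_1}(u-1)\}=O(1)$ by your own tail estimates (using $a_{n_1}\to\infty$), and $n_1\Delta_m\le n_1^2\Delta_m\to0$. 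So $P\{(S_m(\omega_{n_1})-b_{n_1})/a_{n_1}\le x\}\to1$ for every $x$, i.e.\ the extremal index of the array is $O(n_1\Delta_m)\to0$. Consecutive grid counts coincide with probability $1-O(n_1\Delta_m)$, so exceedances come in runs of length at least of order $(n_1\Delta_m)^{-1}$.

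A convergence-of-types check confirms this. $S(\omega_{n_1})$ does not depend on $\Delta_m$, yet with $\omega_{n_1}=n_1^{-2/3}$ both $\Delta_m=n_1^{-3}$ and $\Delta_m=n_1^{-4}$ satisfy every hypothesis. The two resulting centerings differ by a multiple of $a_{n_1}$ of order $\log n_1\to\infty$, so they cannot both give nondegenerate limits.

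The underlying tension is this. Lemma~\ref{lem4} forces $\Delta_m=o(n_1^{-2})$, hence $m_{n_1}\gg n_1^2$. But the number of effectively distinct windows is only of order $\omega_{n_1}^{-1}\log(1/\omega_{n_1})$, independent of $\Delta_m$; Pickands ($\alpha=1$) or Naus--Alm-type asymptotics give $P\{S(\omega_{n_1})>n_1\omega_{n_1}+\sigma_{n_1}z\}\approx\omega_{n_1}^{-1}z^2\{1-\Phi(z)\}$, with $\Phi$ the standard normal distribution function.

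Since $[G(x)]^\theta=G(x-\log\theta)$, the theorem's existence claim amounts to a Gumbel limit for $S(\omega_{n_1})$ under \emph{some} normalization. To prove it you would centre at the level $z$ solving $\omega_{n_1}^{-1}z^2\{1-\Phi(z)\}\approx1$. You would then need a clumping or strong-approximation argument for the continuous scan. One route is a KMT coupling of the uniform empirical process with a Brownian bridge, followed by the Gumbel law for maximal bridge increments; the coupling error is negligible on the Gumbel scale because your hypotheses imply $n_1\omega_{n_1}/\log^3 n_1\to\infty$. The i.i.d.-comparison-at-level-$m_{n_1}$ plus Lemma~\ref{depgevd} route cannot deliver this.

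Separately, the uniform mixing condition you ``adopt as given'' is not supplied by Lemma~\ref{lem:mixing}, whose bound depends on the block sizes $s,t$.
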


  \begin{proof}
  	Under \(H_0\), the probability integral transform gives
  	\[
  	U_1,\ldots,U_{n_1}\stackrel{i.i.d.}{\sim}\operatorname{Unif}(0,1).
  	\]
  	For each grid point \(u_i=(i-1)\Delta_m\),
  	\[
  	C_i(\omega_{n_1})
  	=
  	\sum_{j=1}^{n_1}
  	\mathbf 1\{U_j\in [u_i,u_i+\omega_{n_1}]\},
  	\qquad i=1,\ldots,m_{n_1}.
  	\]
  	Hence
  	\[
  	C_i(\omega_{n_1})\sim \operatorname{Bin}(n_1,\omega_{n_1}).
  	\]
  	
  	Let
  	\[
  	F_{n_1}(x)
  	=
  	P\{\operatorname{Bin}(n_1,\omega_{n_1})\le x\}.
  	\]
  	Also define
  	\[
  	\mu_{n_1}=n_1\omega_{n_1},
  	\qquad
  	\sigma_{n_1}^2=n_1\omega_{n_1}(1-\omega_{n_1}),
  	\]
  	and
  	\[
  	d_{m_{n_1}}
  	=
  	\sqrt{2\log m_{n_1}}
  	-
  	\frac{\log\log m_{n_1}+\log(4\pi)}
  	{2\sqrt{2\log m_{n_1}}}.
  	\]
  	Set
  	\[
  	b_{n_1}
  	=
  	\mu_{n_1}+\sigma_{n_1}d_{m_{n_1}},
  	\qquad
  	a_{n_1}
  	=
  	\frac{\sigma_{n_1}}{d_{m_{n_1}}}.
  	\]
  	
  	By the Cramér-type moderate deviation approximation for binomial tails,
  	under
  	\[
  	n_1\omega_{n_1}\to\infty,
  	\qquad
  	n_1\omega_{n_1}^2\to0,
  	\qquad
  	\log m_{n_1}=o\{(n_1\omega_{n_1})^{1/3}\},
  	\]
  	the binomial upper tail is asymptotically equivalent to the normal upper tail
  	at the extreme levels \(b_{n_1}+a_{n_1}x\). Therefore,
  	\[
  	\frac{1-F_{n_1}(b_{n_1}+a_{n_1}x)}
  	{1-F_{n_1}(b_{n_1})}
  	\to e^{-x},
  	\qquad x\in\mathbb R.
  	\]
  	Moreover, by the choice of \(b_{n_1}\),
  	\[
  	m_{n_1}\{1-F_{n_1}(b_{n_1})\}\to1.
  	\]
  	Consequently,
  	\[
  	m_{n_1}\{1-F_{n_1}(b_{n_1}+a_{n_1}x)\}
  	\to e^{-x}.
  	\]
  	
  	Let
  	\[
  	\widetilde C_1,\ldots,\widetilde C_{m_{n_1}}
  	\]
  	be iid random variables with distribution \(F_{n_1}\). Then
  	\[
  	\begin{aligned}
  		P\left(
  		\frac{\max_{1\le i\le m_{n_1}}\widetilde C_i-b_{n_1}}
  		{a_{n_1}}
  		\le x
  		\right)
  		&=
  		F_{n_1}(b_{n_1}+a_{n_1}x)^{m_{n_1}}  \\
  		&=
  		\left[
  		1-\{1-F_{n_1}(b_{n_1}+a_{n_1}x)\}
  		\right]^{m_{n_1}} \\
  		&\to
  		\exp(-e^{-x}).
  	\end{aligned}
  	\]
  	Thus the iid comparison maximum has the standard Gumbel limit.
  	
  	By Lemma 2, the sequence
\(\{C_i(\omega_{n_1})\}_{i=1}^{m_{n_1}}\) 
  	is stationary. By Lemma 3, separated blocks are
  	asymptotically independent under the assumption
  	$
  	n_1\omega_{n_1}^2\to0.
  	$
  	Therefore, applying the extreme-value theorem for stationary sequences with
  	extremal index \(\theta\in(0,1]\), we obtain
  	\[
  	P\left(
  	\frac{S_m(\omega_{n_1})-b_{n_1}}{a_{n_1}}
  	\le x
  	\right)
  	\to
  	\{\exp(-e^{-x})\}^{\theta}
  	=
  	\exp(-\theta e^{-x}).
  	\]
  	
  	Finally, since
$
  	n_1^2\Delta_m\to0
$
  	and
  	$
  	n_1\omega_{n_1}\to\infty,
  	$
  	we have
  	$
  	\Delta_m=o(\omega_{n_1}).
  	$
  	Hence Lemma 4 applies and gives
  	\[
  	P\{S_m(\omega_{n_1})=S(\omega_{n_1})\}\to1.
  	\]
  	Therefore,
  	\[
  	\begin{aligned}
  		\left|
  		P\left(
  		\frac{S(\omega_{n_1})-b_{n_1}}{a_{n_1}}\le x
  		\right)
  		-
  		P\left(
  		\frac{S_m(\omega_{n_1})-b_{n_1}}{a_{n_1}}\le x
  		\right)
  		\right| 
  		 \le
  		P\{S(\omega_{n_1})\ne S_m(\omega_{n_1})\}
  		\to0.
  	\end{aligned}
  	\]
  	Thus
  	\[
  	P\left(
  	\frac{S(\omega_{n_1})-b_{n_1}}{a_{n_1}}
  	\le x
  	\right)
  	\to
  	\exp(-\theta e^{-x}).
  	\]
  	This completes the proof.
  \end{proof}

\begin{remark}
	We use a GEV distribution in Theorem~\ref{thm:gev-grid-scan}  because the scan statistic is the maximum of a dependent stationary sequence of overlapping window counts, and extreme-value theory implies that, after normalization, such maxima converge to a GEV law up to an extremal-index correction. In practice, the normalizing constants and the exact finite-sample tail behavior are difficult to obtain explicitly, especially under dependence. Therefore, we estimate the GEV parameters by maximum likelihood, which provides a convenient and effective way to calibrate critical values directly from the observed extreme behavior of the scan statistic.
\end{remark}

\begin{remark}
	The GEV conclusion in Theorem~\ref{thm:gev-grid-scan} also applies to the
	unconditional formulation. In the unconditional case, after the time
	transformation under \(H_0\), the transformed case process is a homogeneous
	Poisson process on \([0,1]\). Therefore, counts over disjoint intervals are
	independent. Hence, when two blocks of grid windows are separated by more than
	\[
	\left\lceil \frac{\omega_{n_1}}{\Delta_m}\right\rceil+1,
	\]
	the corresponding block-count vectors are independent in the unconditional
	Poisson model. Thus the distributional mixing condition holds more directly
	than in the conditional fixed-\(n_1\) case, where the multinomial constraint
	induces weak dependence among disjoint intervals and requires the
	Poissonization argument in Lemma~3.
	
	Consequently, the conditional case treated in Theorem~\ref{thm:gev-grid-scan}
	is the more difficult case. Since the unconditional case has independent
	increments over disjoint intervals, the same GEV approximation, with the
	extremal-index correction for overlapping windows, remains valid under analogous shrinking-window and normalization conditions.
\end{remark}

\begin{remark}[Practical choice of the window size and grid size]
	The assumptions in Theorem~\ref{thm:gev-grid-scan} provide useful guidance for
	choosing the window size in applications. Writing
	\[
	\mu_{n_1}=n_1\omega_{n_1},
	\]
	the conditions
	\[
	n_1\omega_{n_1}\to\infty,
	\qquad
	n_1\omega_{n_1}^2\to0
	\]
	are equivalent to
	\[
	\mu_{n_1}\to\infty,
	\qquad
	\frac{\mu_{n_1}^2}{n_1}\to0.
	\]
	Thus, the expected null count in a scanning window should be moderately large, but
	still small relative to \(\sqrt{n_1}\). A natural default choice is
	\[
	\omega_{n_1}=c n_1^{-2/3}.
	\]
	
	In finite samples, \(c\) may be chosen so that \(n_1\omega_{n_1}\) is reasonably large
	while \(n_1\omega_{n_1}^2\) remains small. For example, with \(n_1=5000\),
	\(\omega=0.005\) gives
	\[
	n_1\omega=25
	\qquad\text{and}\qquad
	n_1\omega^2=0.125,
	\]
	which is consistent with the shrinking-window regime. In contrast, \(\omega=0.05\)
	gives
	\[
	n_1\omega=250
	\qquad\text{and}\qquad
	n_1\omega^2=12.5,
	\]
	which is farther from the asymptotic regime and may require more reliance on
	permutation calibration.

\end{remark}

Since $F_X$ is unknown, we estimate it from the control sample using the empirical
CDF. Denote the estimator by $\widehat F_X$ 
  and define
\begin{equation}\label{V}
V_j=\widehat F_X(Y_j),\qquad j=1,\dots,n_1.
\end{equation}
Theorem 1 is stated for the oracle transformation \(F_X\). In practice, \(F_X\)
is replaced by \(\widehat F_X\). The resulting empirical-transform procedure is
validated numerically in Sections 3 and 4, while the permutation calibration
provides finite-sample validity under label exchangeability.

 The corresponding grid-based sliding-window count sequence and scan statistic are then defined by
\[
\widehat C_i(\omega_{n_1})
=\sum_{j=1}^{n_1} \mathbf 1\{V_j\in [u_i,u_i+\omega_{n_1}]\},
\qquad i=1,\dots,m_{n_1},
\]
and
\begin{equation}\label{Sn_def}
	\widehat S_m(\omega_{n_1})
	=
	\max_{1\le i\le m_{n_1}} \widehat C_i(\omega_{n_1}).
\end{equation}

\subsection{Maximum likelihood estimation of parameters $\mu$, $\sigma$, $\xi$, and $\theta$}

\citet{Prescott:1980, Prescott:1983} and \citet{Hosking:1985} studied maximum likelihood estimation for the GEV distribution and related extreme value models. The following approach was discussed in \citet{textbook5} and \citet{textbook8}, and applied to scan statistics in  \citet{Wu:2020}.

Consider the sequence $\mathbb{S}=\{S_1,\ldots,S_m\}$, which represents the sliding-window counts computed from moving windows. This sequence is dependent and stationary. To remove dependence while preserving the marginal distribution, we construct an independent sample
\[
\mathbb{S}^*=\{S_1^*,\ldots,S_m^*\}
\]
by sampling with replacement from $\mathbb{S}$. This bootstrap sample has the same marginal distribution $F(s)$ but does not retain the dependence structure.

Let $u$ denote a high threshold, typically chosen as the 0.95 or 0.99 quantile of the sample. Let $n_u$ denote the number of exceedances above $u$, and denote the exceedance observations by
\[
\{S_{i_1}^*,S_{i_2}^*,\ldots,S_{i_{n_u}}^*\}.
\]

We then form the sequence of points
\begin{equation*}
	N_m=\left\{
	\left(\frac{i_k}{m}, \frac{S_{i_k}^*-b_m}{a_m}\right),\;
	k=1,\ldots,n_u
	\right\},
\end{equation*}
where $a_m>0$ and $b_m$ are normalizing constants. This sequence converges to a marked Poisson process \citep{textbook5,textbook7}. The intensity function of the limiting process on the set
\[
A=[s,t]\times(x,\infty)
\]
is given by
\begin{equation*}
	\label{MLEL}
	\lambda'(x)
	=
	\begin{cases}
		\dfrac{1}{\sigma}
		\left[
		1+\xi\left(\dfrac{x-\mu}{\sigma}\right)
		\right]^{-1/\xi -1},
		& \xi\neq0,\\[2ex]
		\dfrac{1}{\sigma}
		\exp\left(-\dfrac{x-\mu}{\sigma}\right),
		& \xi=0.
	\end{cases}
\end{equation*}
 
Based on the exceedances, the likelihood function of the marked Poisson process is

\begin{equation}
L(\mu,\sigma,\xi)\propto
\begin{cases}
	\exp\!\left(
	-
	\left[
	1+\xi
	\left(
	\dfrac{u-\mu}{\sigma}
	\right)
	\right]^{-1/\xi}
	\right)
	\displaystyle\prod_{k=1}^{n_u}
	\dfrac{1}{\sigma}
	\left[
	1+\xi
	\left(
	\dfrac{S_{i_k}^*-\mu}{\sigma}
	\right)
	\right]^{-1/\xi -1},
	& \xi\neq 0,
	\\[8pt]
	\exp\!\left(
	-
	\exp\!\left(
	-\dfrac{u-\mu}{\sigma}
	\right)
	\right)
	\displaystyle\prod_{k=1}^{n_u}
	\dfrac{1}{\sigma}
	\exp\!\left(
	-\dfrac{S_{i_k}^*-\mu}{\sigma}
	\right),
	& \xi=0.
\end{cases}\label{likeli}
\end{equation}

The likelihood function does not have a closed-form solution, so the estimators
$\hat\mu$, $\hat\sigma$, and $\hat\xi$ are obtained using numerical optimization methods such as maximum likelihood estimation.

\subsubsection*{Estimation of the extremal index $\theta$}

The extremal index $\theta$ measures the degree of clustering of exceedances in dependent sequences. Methods for estimating $\theta$ were studied in
\cite{Leadbetter:1989},
\cite{Hsing:1991},
\cite{Smith:1994},
and \cite{Suveges:2007}.
Further discussion is provided in \cite{textbook5}.

Let $n_u$ denote the number of exceedances above the threshold $u$, and let $n_c$ denote the number of clusters formed by these exceedances. A simple estimator of the extremal index is given by
\begin{equation}
	\label{esttheta}
	\hat{\theta}
	=
	\frac{n_c}{n_u}.
\end{equation}
An exceedance cluster is defined as a maximal consecutive run of indices \(i\) for which \(S_i>u\).

Algorithm~\ref{alg:cap} summarizes the implementation of the testing procedure. 
\begin{algorithm}[hb!]
	\footnotesize
	\caption{GEV-based calibration procedure}
	\label{alg:cap}
	\begin{algorithmic}[1]
		
		\State \textbf{Initialization:}
		\( \text{rep} \), \( \omega \), \( \alpha \), threshold \( u \),
		control sample \(X=(X_1,\ldots,X_{n_0})\), and case sample
		\(Y=(Y_1,\ldots,Y_{n_1})\).
		
		\State Compute the transformed case observations
		$
		V_i, i=1,\ldots,n_1,
		$
		in (\ref{V}).
		
		\State Evaluate the ordered sliding-window count sequence
		$
		\mathbb{S}=\{S_1,\ldots,S_m\}
		$
		and compute the observed scan statistic
		$
		\widehat S_m(\omega_{n_1})=\max_{1\leq i\leq m} S_i .
		$
		
		\State Find the exceedances of \(\mathbb{S}\) above the threshold \(u\):
		$
		\{S_i:S_i>u,\ i=1,\ldots,m\}.
		$
		
		\State Let \(n_u\) be the number of exceedances in \(\mathbb{S}\), and let
		\(n_c\) be the number of exceedance clusters. Estimate the extremal index by
		$
		\hat{\theta}
		=
		\frac{n_c}{n_u} .
		$
		
		\For{\texttt{\(j \leftarrow 1\) \textbf{ to } \(\text{rep}\)}}
		
		\State Generate an i.i.d. bootstrap sample
		$
		\mathbb{S}^*=\{S_1^*,\ldots,S_m^*\}
		$
		by sampling with replacement from \(\mathbb{S}\).
		
		\State Find the exceedances of \(\mathbb{S}^*\) above the threshold \(u\):
		\[
		\{S_i^*:S_i^*>u,\ i=1,\ldots,m\}.
		\]
		
		\State Estimate
		$
		\hat{\xi}^{(j)}, \hat{\mu}^{(j)}, \text{ and } \hat{\sigma}^{(j)}
		$
		using the likelihood function in (\ref{likeli}).
		
		\State Compute the \((1-\alpha)\)-quantile
		$
		\text{cri}_\alpha^{(j)}
		$
		of
		$
		G^{\hat{\theta}}
		\left(
		x\mid
		\hat{\xi}^{(j)},\hat{\mu}^{(j)},\hat{\sigma}^{(j)}
		\right)
		$
		in (\ref{gevdd}).
		
		\EndFor
		
		\State Compute the median critical value
		\[
		\overline{\text{cri}}_\alpha
		=
		\operatorname{median}
		\left\{
		\text{cri}_\alpha^{(1)},\ldots,\text{cri}_\alpha^{(\text{rep})}
		\right\}.
		\]
		
		\State Reject \(H_0\) if
		\[
		\widehat S_m(\omega_{n_1})>\overline{\text{cri}}_\alpha .
		\]
		
	\end{algorithmic}
\end{algorithm}

 \subsection{Permutation test}
 
 Let $X=\{X_1,\ldots,X_{n_0}\}$ be the control sample and $Y=\{Y_1,\ldots,Y_{n_1}\}$ be the
 case sample. Compute the transformed values
 \begin{equation}\label{Vi_perm}
 	V_i=\widehat F_X(Y_i)=\frac{1}{n_0}\sum_{j=1}^{n_0}\mathbb I\{X_j\le Y_i\},
 	\quad i=1,\ldots,n_1.
 \end{equation}
 Let $V_{(1)}\le \cdots \le V_{(n_1)}$ denote the order statistics of $\{V_i\}$.
 For a given window length $\omega\in(0,1)$, define
 \[
\widehat {S}_i(\omega)=\sum_{j=1}^{n_1}\mathbb I\{V_{(i)}\le V_{(j)}<V_{(i)}+\omega\},
 \quad i=1,\ldots,n_1,
 \]
 and the scan statistic
 \begin{equation}\label{Sn_perm}
 	\widehat {S}(\omega)=\max_{1\le i\le n_1} \widehat {S}_i(\omega).
 \end{equation}
 
 To assess significance, form the permutation null by pooling $Z=X\cup Y$ and repeatedly
 randomly relabeling the pooled observations into a ``control'' subset of size $n_0$ and a
 ``case'' subset of size $n_1$. For each relabeling, recompute $\widehat F_X$ from the
 permuted control subset, transform the permuted case subset as in \eqref{Vi_perm}, and
 compute the corresponding scan statistic $\widehat S^{\mathrm{perm}}(\omega)$ defined in
 \eqref{Sn_perm}. Let $\widehat S^{\mathrm{obs}}(\omega)$ denote the observed statistic.
 
 The permutation $p$-value is computed by 
 \[
 \hat p=
 \frac{1+\#\{\widehat S^{\mathrm{perm}}(\omega)\ge \widehat S^{\mathrm{obs}}(\omega)\}}
 {1+B},
 \]
 where $B$ is the number of random permutations.
 
 Under $H_0$, the joint distribution of the pooled sample is invariant to relabeling
 (exchangeability of labels). Since the permutation procedure recomputes the entire
 transformation and statistic after each relabeling, the resulting permutation distribution
 provides a valid finite-sample calibration of $\widehat S(\omega)$.

\FloatBarrier 
\section{Numerical Results}

To evaluate the performance of our proposed methods, we consider the following baseline
intensity functions:
\begin{itemize}
	\item Linear: $\lambda_0(t)=a+bt$,
	\item Exponential: $\lambda_0(t)=a e^{bt}$,
\end{itemize}
where $a>0$ and $b\ge 0$.

In selecting the parameters, we examined the overall behavior of the intensities. The linear intensity function increases at a constant rate, with $b$ directly controlling the growth rate: for $b=0.05$ the growth is slow, for $b=0.5$ the growth is moderate, and for $b=2$ and $b=5$ the growth is rapid. In contrast, the exponential intensity
function exhibits accelerated growth as $t$ increases: for $b=0.05$ the increase is slow,
for $b=0.5$ the growth is moderate, and for $b=2$ and $b=5$ the intensity increases
rapidly.

In the simulations, we use \(n_0=n_1=5000\), with event locations rescaled to
\([0,1]\). 
 Control event locations are generated from the density
\(f_0(t)=\lambda_0(t)/\Lambda_0(1)\), and, under the null hypothesis, case
event locations are generated from the same density.
We use 1000 Monte Carlo replications and 20 bootstrap repetitions for
GEV calibration. The estimator \(\widehat F_X\) is the empirical CDF of the control
sample throughout the numerical study. The permutation calibration uses \(B=\text{1000}\)
random relabelings in each Monte Carlo replication. We considered several
combinations of \((a,b)\), including \((0.05,0.5)\), \((0.5,0.05)\), \((1,2)\), and
\((1,5)\). We used scanning window sizes
\(\omega\in\{0.005,0.01,0.05\}\). The threshold level for GEV fitting was set to
\(p=0.99\), and the nominal significance level was fixed at \(\alpha=0.05\).

\citet{Picard2018} proposed a continuous testing framework that constructs a $p$-value
process based on sliding windows for testing Poisson process intensities. We denote this
approach by CTS. For comparison, CTS was applied using the same transformed scale and the same
window sizes.
 
Figure~\ref{f-1} presents the empirical sizes of the proposed GEV-calibrated scan test (GEV-ST), the permutation test (Permutation), and the competing CTS method under the linear and exponential NHPP models.  Overall, the permutation-based calibration remains close
to the nominal level across the considered settings, while GEV-ST yields
comparable calibration but shows a slightly higher Type I error rate for larger window sizes.
This is consistent with the asymptotic theory. For example, when \(n_1=5000\) and
\(\omega=0.05\), we have
$
n_1\omega=250
\text{ and }
n_1\omega^2=12.5.
$
Although \(n_1\omega\) is large, \(n_1\omega^2\) is not small, so this finite-sample setting is
farther from the shrinking-window regime assumed in Theorem~\ref{thm:gev-grid-scan}.
Larger windows also increase overlap among neighboring scan windows, which can strengthen
local dependence. CTS can be more sensitive to the choice of window size and to baseline
heterogeneity.

\begin{figure}[h!]
	\centering
	\begin{subfigure}[t]{0.48\linewidth}
		\centering
		\includegraphics[width=\linewidth]{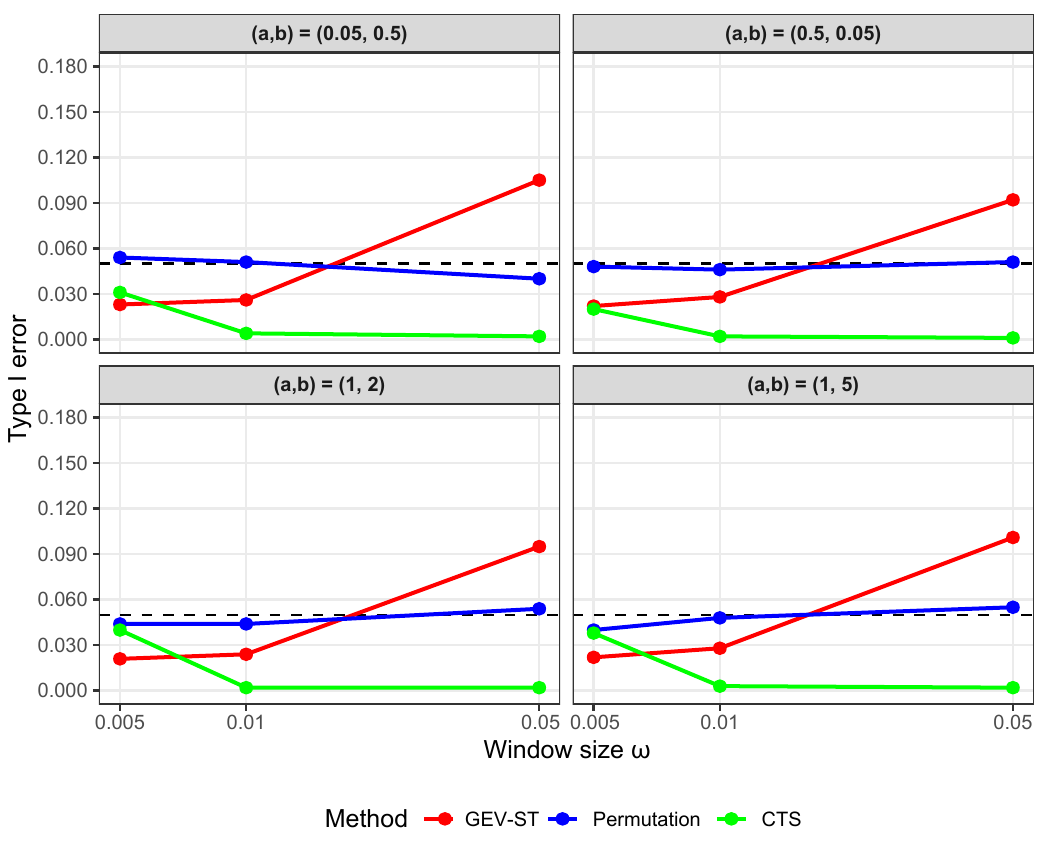}
		\caption{Linear baseline intensity.}
		\label{fig1a}
	\end{subfigure}
	\hfill
	\begin{subfigure}[t]{0.49\linewidth}
		\centering
		\includegraphics[width=\linewidth]{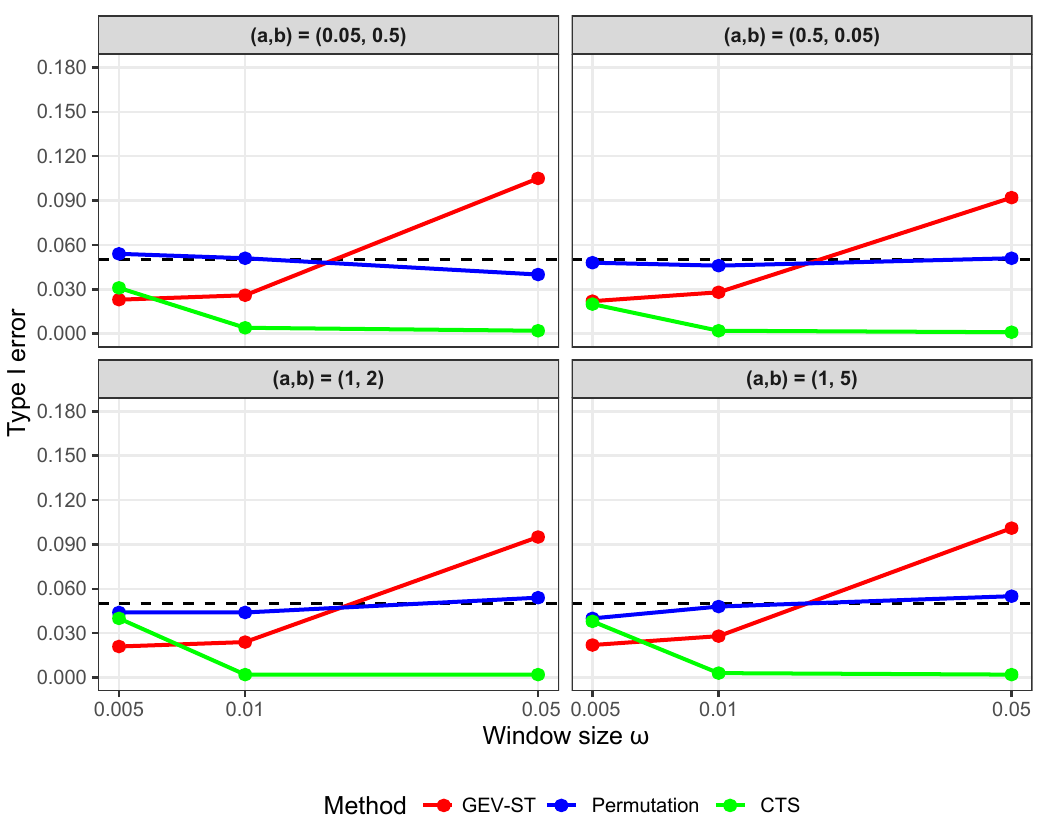}
		\caption{Exponential baseline intensity.}
		\label{fig1b}
	\end{subfigure}
	\caption{Empirical Type I error rates for the GEV-ST, permutation, and CTS methods under linear and exponential NHPP null models.}
	\label{f-1}
\end{figure}

\subsection{NHPP with Embedded Cluster}

Next, we study the power of the proposed methods under alternatives where a true cluster
is present. We consider both linear and exponential baseline intensity functions, the same
settings used in the empirical size study. To create a cluster, we increase the case-process
intensity over a short interval while keeping the rest of the intensity unchanged.

Specifically, for a given baseline intensity function $\lambda_0(t)$, we generate the
case-process intensity as
\begin{equation}
	\label{eq:lambda1_general}
	\lambda_1(t)=\lambda_0(t)+c\,\mathbb{I}\{t\in[\tau,\tau+s]\},
	\qquad 0\le t\le 1,
\end{equation}
where $c>0$ is a constant that controls the strength of the cluster, $\tau$ is the starting
location, and $s$ is the cluster length. The indicator function means that the intensity is
increased by $c$ inside the interval $[\tau,\tau+s]$ and stays equal to the baseline outside
this interval. A larger value of $c$ produces a stronger signal and makes the cluster easier
to detect. In all power simulations, the cluster is embedded in the middle of the sequence.

This model reflects the CNV/CNA setting. The baseline $\lambda_0(t)$ represents the normal
sample, and the increased intensity in $\lambda_1(t)$ represents a copy-number amplification
in the tumor sample. As a result, tumor events occur more frequently in the cluster region.

Let
\[
\Lambda_0(t)=\int_0^t \lambda_0(u)\,du
\]
be the baseline cumulative intensity. Under the alternative model above, the total cumulative
intensity of the case process is $\Lambda_0(1)+cs$. Conditional on the total number of case
events $N=n$, the unordered event times follow the distribution function
\begin{equation}
	\label{eq:F1_piecewise_general}
	F_1(t)=
	\begin{cases}
		\dfrac{\Lambda_0(t)}{\Lambda_0(1)+cs}, & 0\le t<\tau,\\[1.0ex]
		\dfrac{\Lambda_0(t)+c\,(t-\tau)}{\Lambda_0(1)+cs}, & \tau\le t\le \tau+s,\\[1.0ex]
		\dfrac{\Lambda_0(t)+cs}{\Lambda_0(1)+cs}, & \tau+s<t\le 1.
	\end{cases}
\end{equation}

This expression shows that the probability of observing events increases faster inside the
cluster interval. As a result, more case events fall into that region compared to the baseline,
which leads to a larger scan statistic.

In the simulations, we choose the window size $\omega$ as the cluster length $s$ and vary the signal strength $c$ over a range of values to study the performance of each test. For each setting, we generate independent control and case samples from the
corresponding NHPP models, apply the GEV-ST, permutation, and CTS methods,
and estimate power as the proportion of rejections over repeated simulations. Figures~\ref{f-2}--\ref{f-9} show the empirical power of the GEV-ST, permutation, and CTS
methods under both linear and exponential baseline intensity functions for different
parameter values and window sizes. In all figures, the power increases as the signal
strength $c$ increases. This is expected, since larger values of $c$ produce more case
events inside the cluster, making the cluster easier to detect.

Figures~\ref{f-2}--\ref{f-5} correspond to linear baseline intensities with parameter
settings $(a,b)=(0.05,0.5)$, $(0.5,0.05)$, $(1,2)$, and $(1,5)$. In Figure~\ref{f-2},
both GEV-ST and permutation achieve high power even when $c$ is small. In Figure~\ref{f-3}, where
the baseline intensity increases slowly, the overall power is lower, but GEV-ST and
permutation still perform better than CTS. In Figures~\ref{f-4} and \ref{f-5}, which
correspond to faster-increasing baseline intensities, the power of all methods improves
as $c$ increases, but GEV-ST and permutation consistently reach high power faster than CTS.
This shows that the proposed methods are more effective at detecting weaker clusters under
different linear baseline patterns.

Figures~\ref{f-6}--\ref{f-9} present the empirical power of the GEV-ST, permutation, and CTS methods under exponential baseline intensities with parameter settings $(a,b)=(0.05,0.5)$, $(0.5,0.05)$, $(1,2)$, and $(1,5)$, respectively. In all cases, the power increases as the signal strength $c$ increases, and the window size has a substantial effect on performance.
In Figure~\ref{f-6}, GEV-ST and permutation show substantially higher power than CTS for small and moderate window sizes ($\omega=0.005$ and $\omega=0.01$), reaching high power at relatively small signal strengths. For the largest window size ($\omega=0.05$), all methods eventually achieve high power.
In Figure~\ref{f-7}, the overall power is low for small and moderate window sizes, indicating that detecting narrow clusters is difficult under this slowly increasing exponential baseline. However, for the larger window size ($\omega=0.05$), GEV-ST shows a clear advantage, with power increasing sharply at higher signal strengths, while permutation and CTS remain lower.
In Figure~\ref{f-8}, all methods have very low power for small and moderate window sizes. In contrast, for the largest window size ($\omega=0.05$), CTS shows a much faster increase in power and clearly outperforms GEV-ST and permutation, indicating improved sensitivity for detecting broader clusters under more rapidly increasing baselines.
In Figure~\ref{f-9}, a similar pattern is observed under the most rapidly increasing baseline. The power remains low for small and moderate window sizes, while for $\omega=0.05$, all methods achieve high power. 
Overall, GEV-ST and permutation perform better under moderate exponential baselines and for smaller window sizes, whereas CTS becomes more effective when the baseline increases rapidly and the cluster size is large.

Across all figures,  GEV-ST and permutation perform well under slowly or moderately increasing exponential baselines and for smaller window sizes, whereas CTS becomes more competitive and can achieve higher power when the baseline intensity increases rapidly and the window size is large.

\begin{figure}[h!]
	\centering
	\begin{minipage}[t]{0.48\linewidth}
		\centering
		\includegraphics[width=\linewidth]{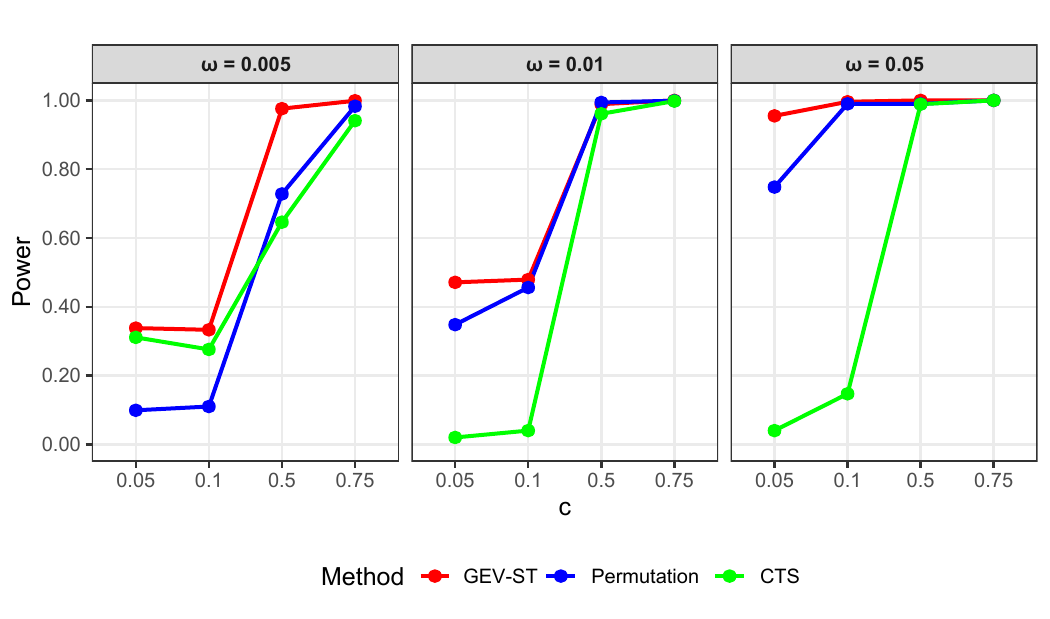}
		\caption{Empirical power under linear baseline intensity with $a=0.05$, $b=0.5$ for different window sizes and signal strengths $c$.}
		\label{f-2}
	\end{minipage}
	\hfill
	\begin{minipage}[t]{0.48\linewidth}
		\centering
		\includegraphics[width=\linewidth]{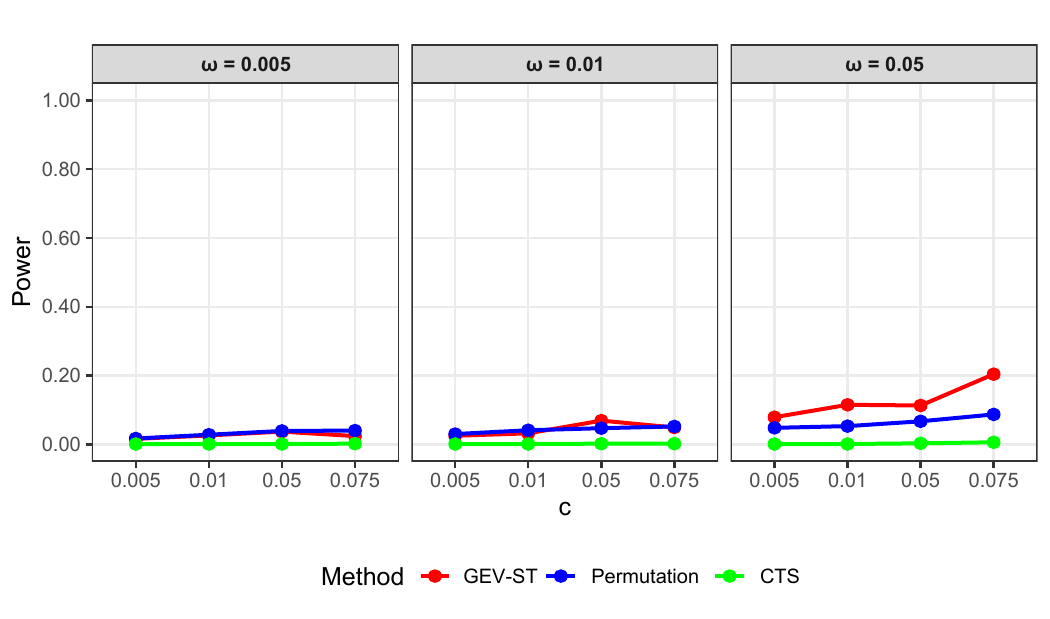}
		\caption{Empirical power under linear baseline intensity with $a=0.5$, $b=0.05$ for different window sizes and signal strengths $c$.}
		\label{f-3}
	\end{minipage}
\end{figure}

\begin{figure}[h!]
	\centering
	\begin{minipage}[t]{0.48\linewidth}
		\centering
		\includegraphics[width=\linewidth]{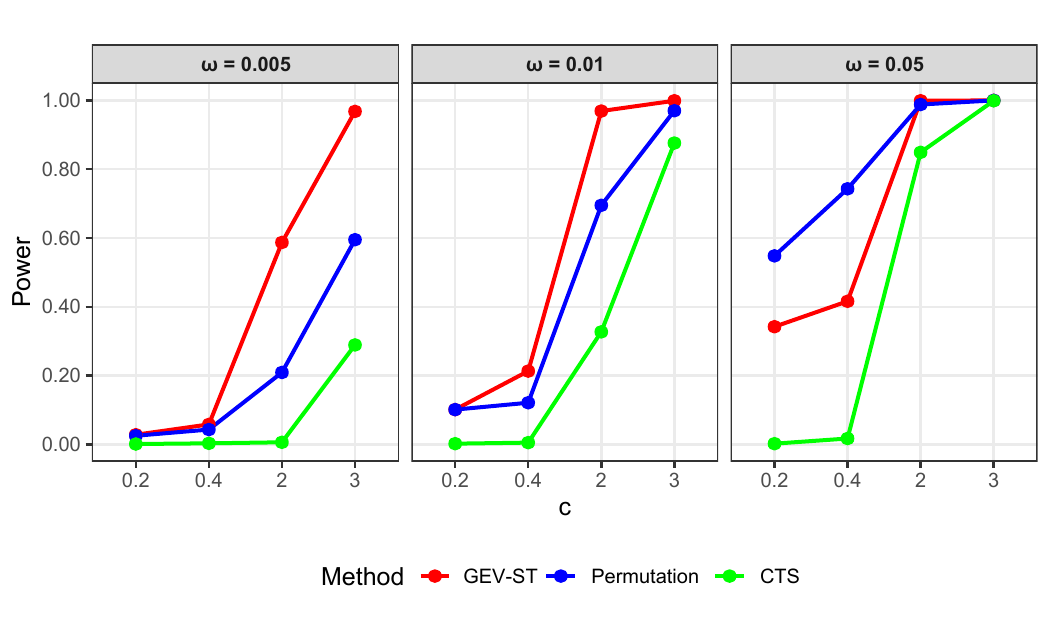}
		\caption{Empirical power under linear baseline intensity with $a=1$, $b=2$ for different window sizes and signal strengths $c$.}
		\label{f-4}
	\end{minipage}
	\hfill
	\begin{minipage}[t]{0.48\linewidth}
		\centering
		\includegraphics[width=\linewidth]{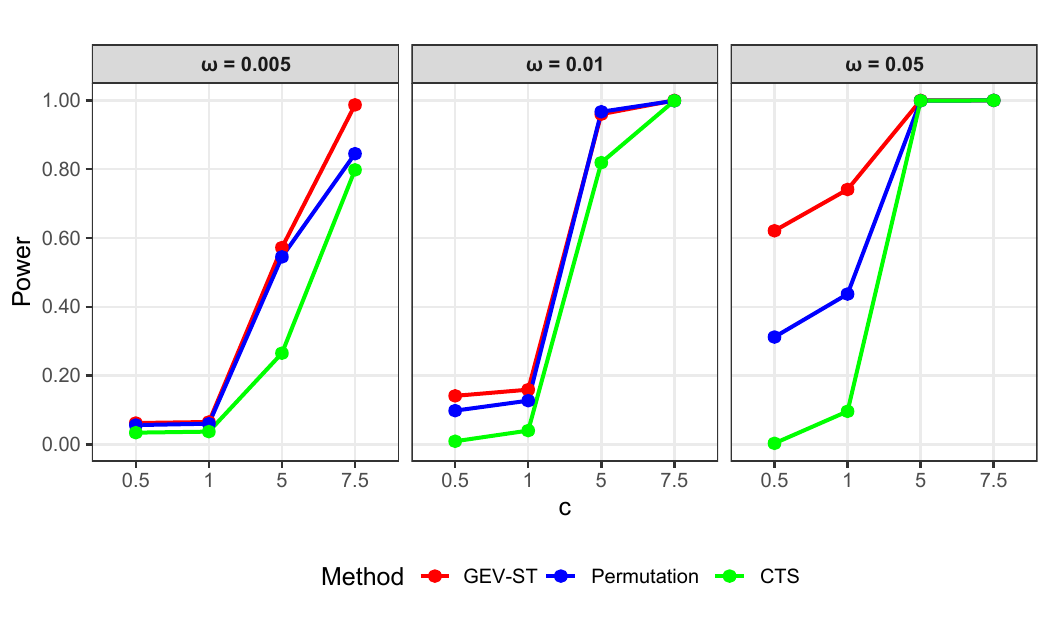}
		\caption{Empirical power under linear baseline intensity with $a=1$, $b=5$ for different window sizes and signal strengths $c$.}
		\label{f-5}
	\end{minipage}
\end{figure}

\begin{figure}[h!]
	\centering
	\begin{minipage}[t]{0.48\linewidth}
		\centering
		\includegraphics[width=\linewidth]{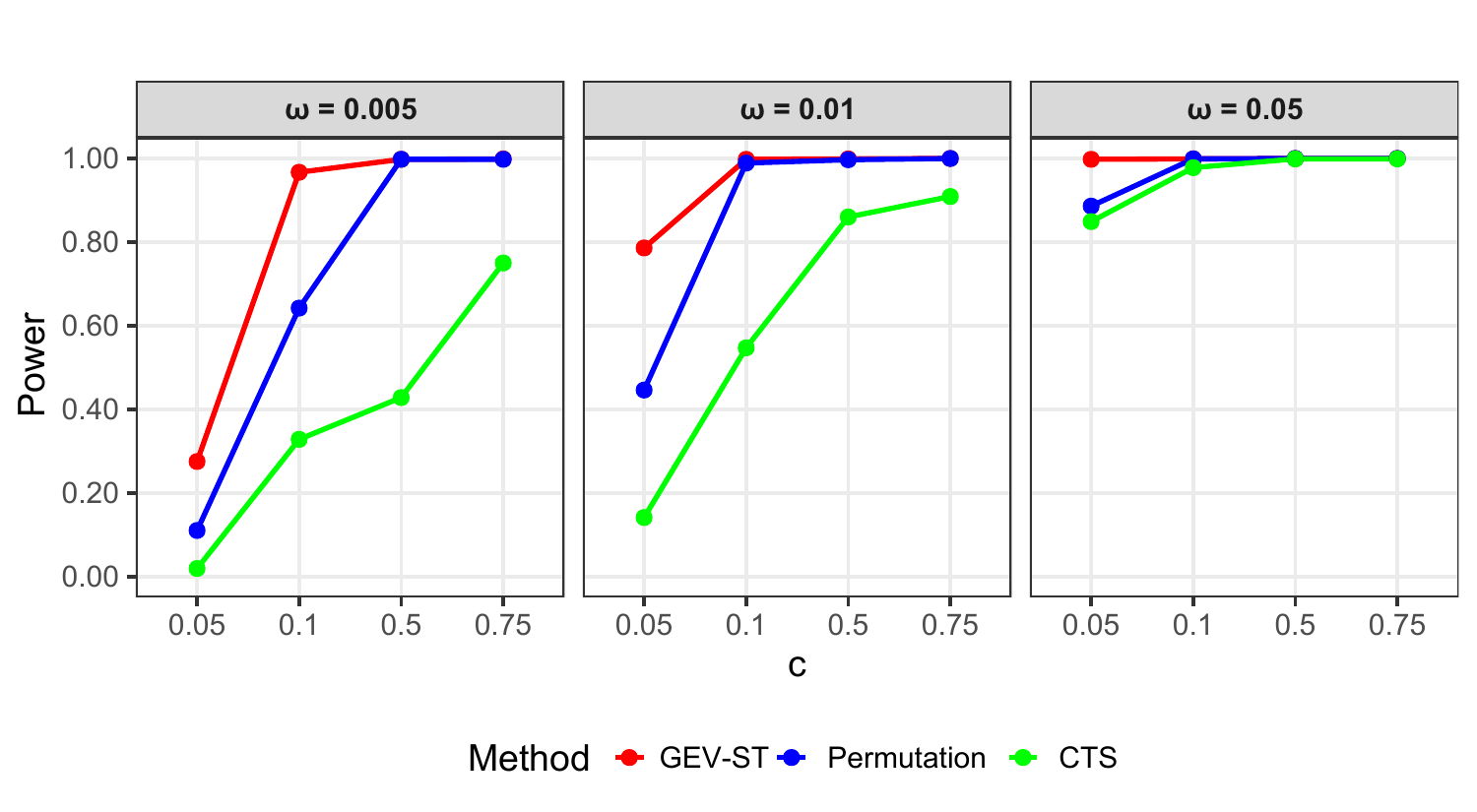}
		\caption{Empirical power under exponential baseline intensity with $a=0.05$, $b=0.5$ for different window sizes and signal strengths $c$.}
		\label{f-6}
	\end{minipage}
	\hfill
	\begin{minipage}[t]{0.48\linewidth}
		\centering
		\includegraphics[width=\linewidth]{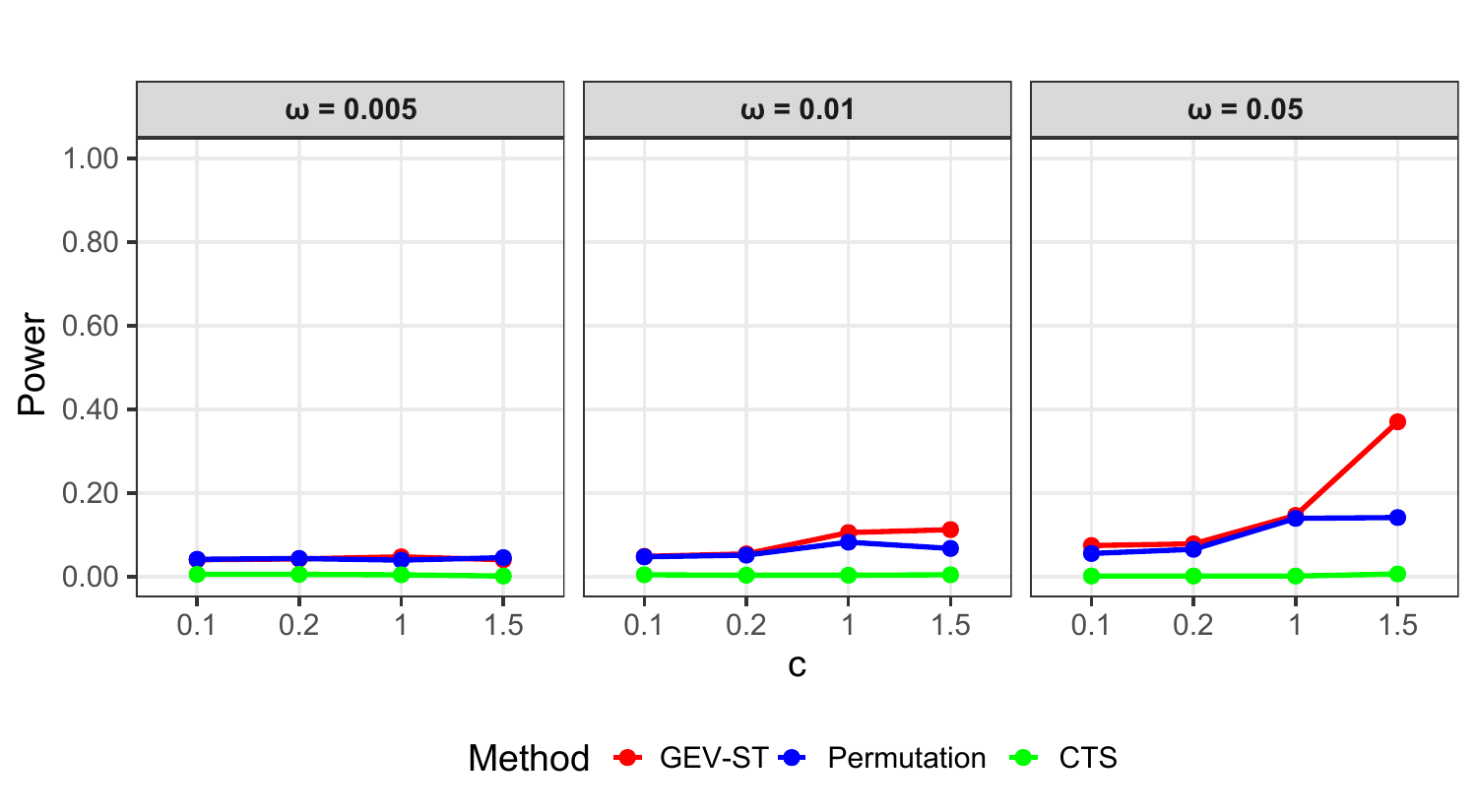}
		\caption{Empirical power under exponential baseline intensity with $a=0.5$, $b=0.05$ for different window sizes and signal strengths $c$.}
		\label{f-7}
	\end{minipage}
\end{figure}

\begin{figure}[htp!]
	\centering
	\begin{minipage}[t]{0.48\linewidth}
		\centering
		\includegraphics[width=\linewidth]{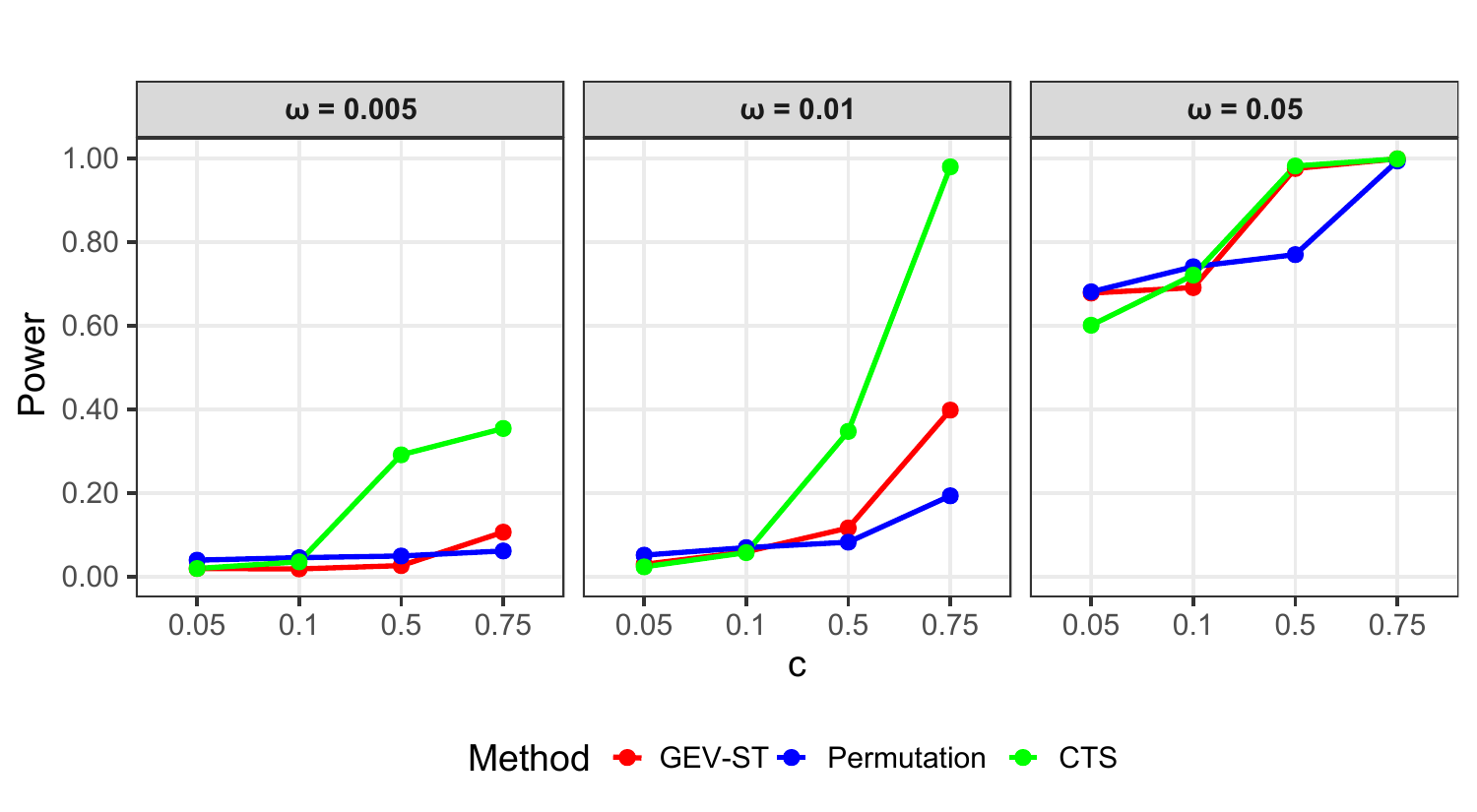}
		\caption{Empirical power under exponential baseline intensity with $a=1$, $b=2$ for different window sizes and signal strengths $c$.}
		\label{f-8}
	\end{minipage}
	\hfill
	\begin{minipage}[t]{0.48\linewidth}
		\centering
		\includegraphics[width=\linewidth]{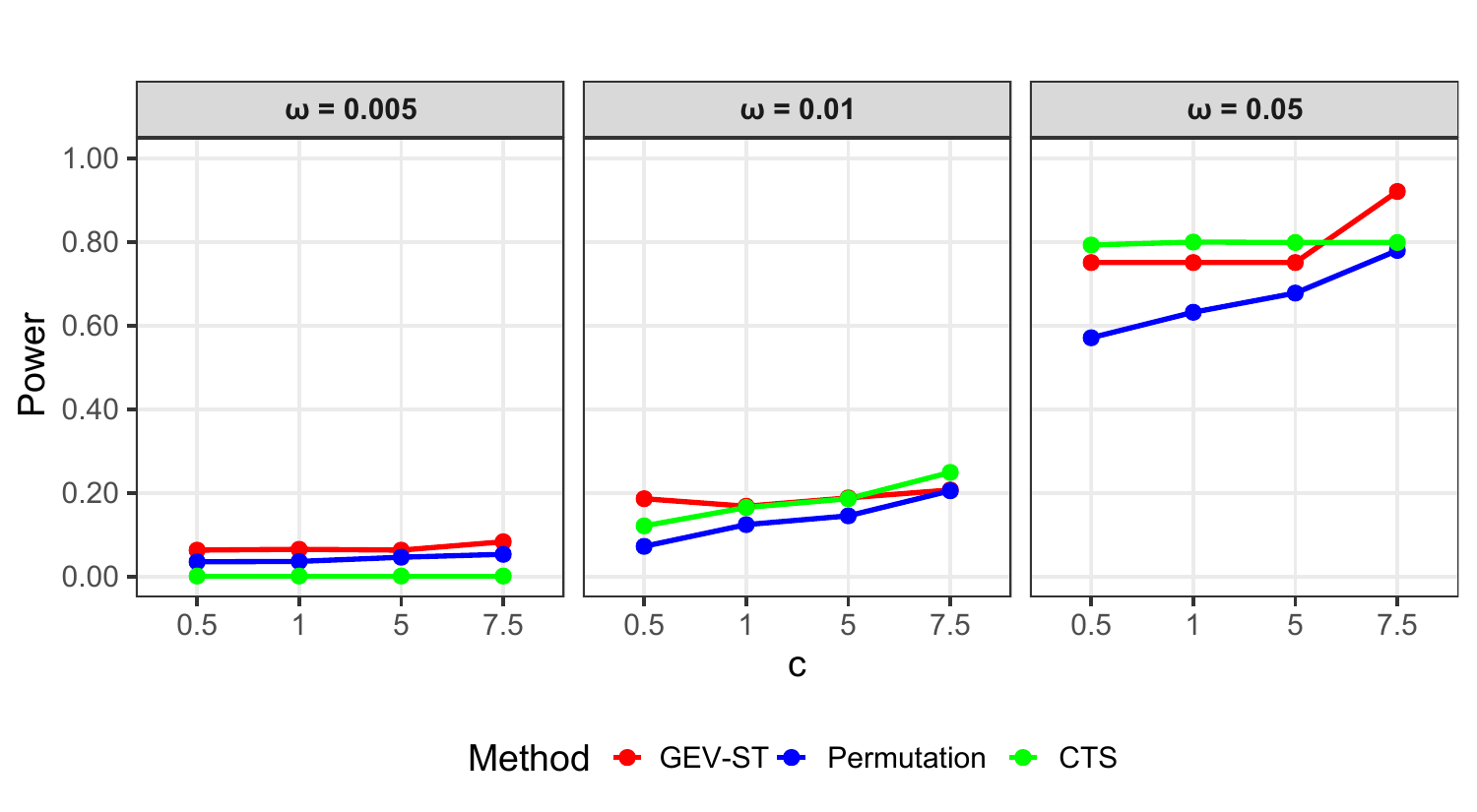}
		\caption{Empirical power under exponential baseline intensity with $a=1$, $b=5$ for different window sizes and signal strengths $c$.}
		\label{f-9}
	\end{minipage}
\end{figure}

\section{Application to Copy Number Variation Detection}

CNVs are genomic regions in which the number of DNA copies
differs from that of a reference genome. In tumor genomes, structural mutations such
as deletions and duplications produce localized changes in sequencing read intensity.
Next-generation sequencing (NGS) technology generates millions of short reads that are
mapped to genomic positions, and these mapped read locations can be naturally modeled
as realizations of point processes. In particular, CNVs correspond to localized changes
in the relative intensity between tumor and normal samples. This two-sample point-process
structure provides a natural framework for scan statistic-based cluster detection.

\subsection{Data description and model formulation}
We analyze the synthetic sequencing data distributed with the SeqCBS package, which mimic the HCC1954/BL1954 tumor-normal sequencing setting studied by \cite{ShenZhang2012}. Shen and Zhang developed a change-point model for nonhomogeneous Poisson processes and applied it to DNA copy number profiling. In the original HCC1954/BL1954 data set, after read alignment and quality filtering, approximately \(7.72\) million reads were obtained from the tumor sample and \(6.65\) million reads from the normal sample. These mapped read locations can be viewed as realizations of two independent nonhomogeneous Poisson processes.
 
The synthetic sequencing data used in this analysis are available in the SeqCBS
package in R. The package provides simulated tumor and normal read locations that
mimic real next-generation sequencing data with copy number changes embedded at
known genomic regions. For the sequence used in our analysis, the normal sample
contains \(n_0 = 15193\) mapped read locations, and the tumor sample contains
\(n_1 = 16452\) mapped read locations. For these sample sizes, we use the scanning
window lengths
\[
\omega \in \{0.005, 0.0075, 0.01\}
\]
on the transformed \([0,1]\) scale. These choices  are motivated by the window-size conditions in Theorem~\ref{thm:gev-grid-scan}, since the expected null tumor counts are moderately large and the corresponding quadratic terms remain relatively small $(n_1\omega = 82.26, 123.39, 164.52;\ n_1\omega^2 = 0.41, 0.93, 1.65)$. Thus, these window lengths provide a finite-sample implementation of the shrinking-window regime used in the extreme-value approximation.

We analyze these read locations using the proposed scan statistic methods, including the
GEV approximation and the permutation-based calibration.
The scan statistic is computed over moving windows along the ordered genomic positions.
Large scan statistic values indicate local excess tumor reads relative to the normal
baseline and are identified as candidate CNV regions. To identify the detected cluster regions, every scanning window whose count exceeds the critical value is retained. Any retained windows that overlap are merged into a single interval. The resulting nonoverlapping intervals are reported as the final detected clusters.

Figures~\ref{f-10}--\ref{f-12} show the detected CNV regions using the three window sizes $\omega=0.005$, $0.0075$, and $0.01$, respectively. Across these window sizes, the GEV-ST and permutation methods consistently identify the same three main CNV regions, and the detected intervals are close to those reported by SeqCBS. This indicates that the proposed scan-statistic methods are stable across these nearby choices of $\omega$. The empirical CDF plots show clear separation between the tumor and normal samples in the detected regions, confirming the presence of copy number changes. The CTS method also detects the main CNV regions; however, for $\omega=0.0075$ and $\omega=0.01$, it identifies additional regions beyond the three main ones.

Figures~\ref{f-13}--\ref{f-15} show the corresponding sliding-window  count plots for the three window sizes. The scan statistic curves from GEV-ST and permutation show clear peaks at the same genomic locations identified by SeqCBS. These peaks correspond to regions where tumor read counts are higher than the normal baseline. As the window size increases, the peaks become slightly broader, but the main detected locations remain stable. CTS also shows peaks at similar locations, although its peaks are sometimes broader or include additional neighboring regions. Overall, the GEV-ST and permutation methods produce results that are consistent with CTS and SeqCBS.

Tables~\ref{T-1}--\ref{T-3} report the detected CNV regions using the three window sizes. The columns give the start and end locations of each detected region for GEV-ST, Permutation, SeqCBS, and CTS. The proposed GEV-ST and permutation methods detect the same main CNV regions across all three window sizes, with only minor changes in the reported interval endpoints. In contrast, CTS detects additional regions when $\omega=0.0075$ and $\omega=0.01$, suggesting that it may be more sensitive to neighboring or broader local departures for these larger window sizes.

\begin{figure}[h!]
	\centering
	\includegraphics[width=.75\linewidth]{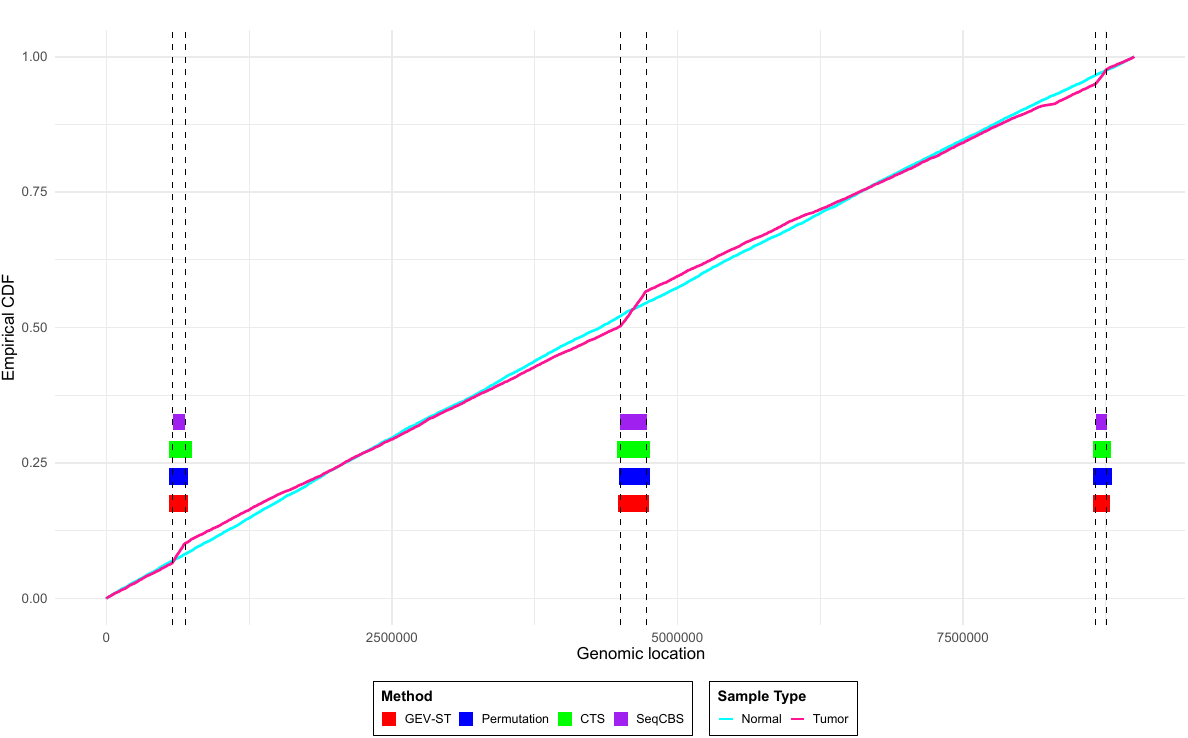}
	\caption{CNV detection with window size $\omega = 0.005$.}
	\label{f-10}
\end{figure}
\begin{figure}[h!]
	\centering
	\includegraphics[width=.75\linewidth]{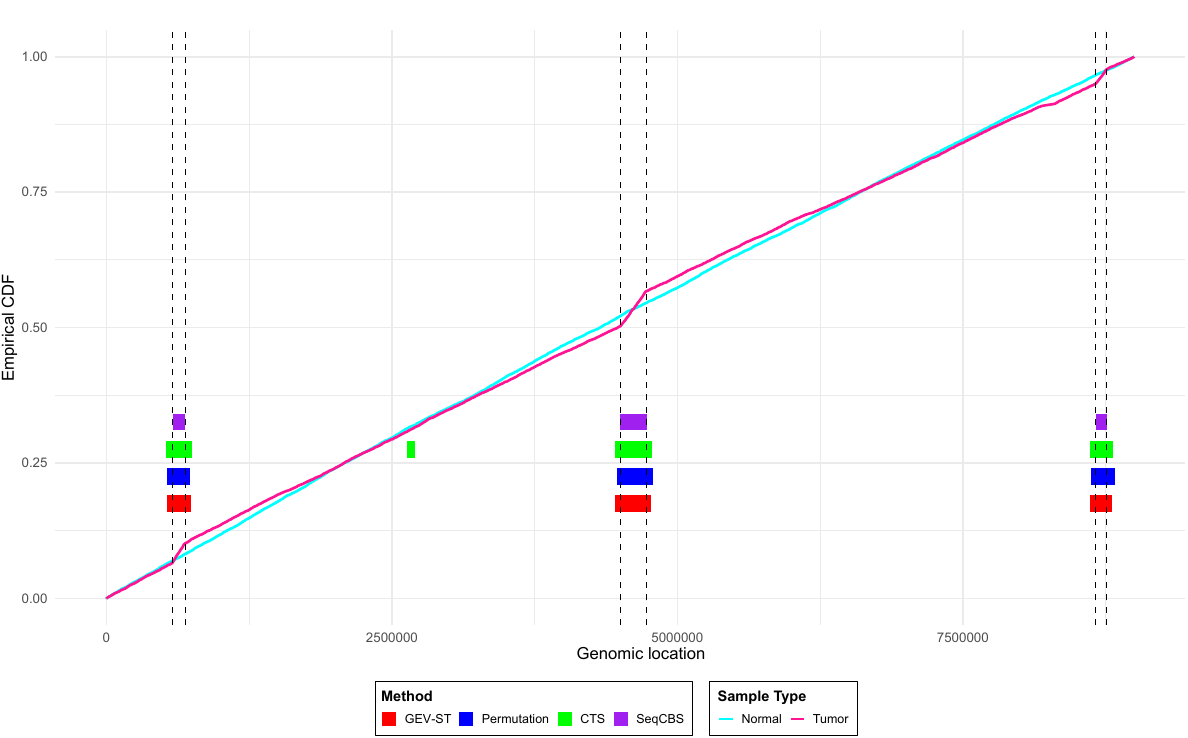}
	\caption{CNV detection with window size $\omega = 0.0075$.}
	\label{f-11}
\end{figure}
\begin{figure}[h!]
	\centering
	\includegraphics[width=.75\linewidth]{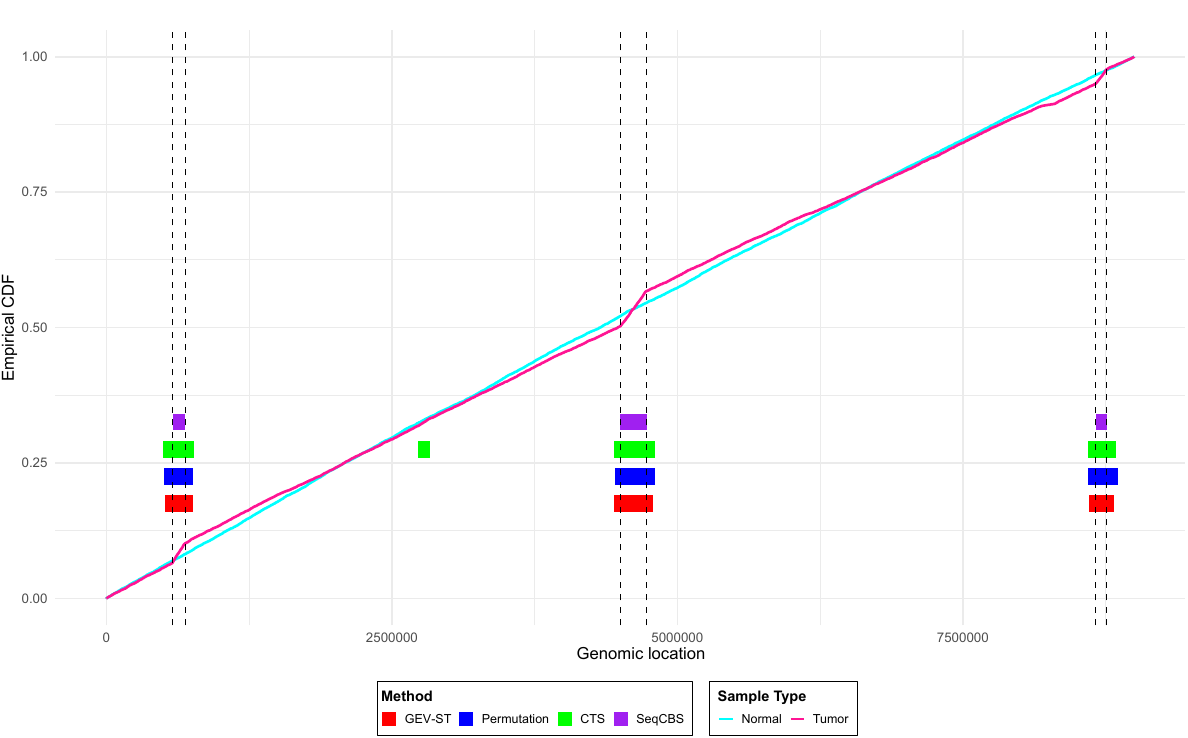}
	\caption{CNV detection with window size $\omega = 0.01$.}
	\label{f-12}
\end{figure}

\begin{figure}[h!]
	\centering
	\includegraphics[width=.75\linewidth]{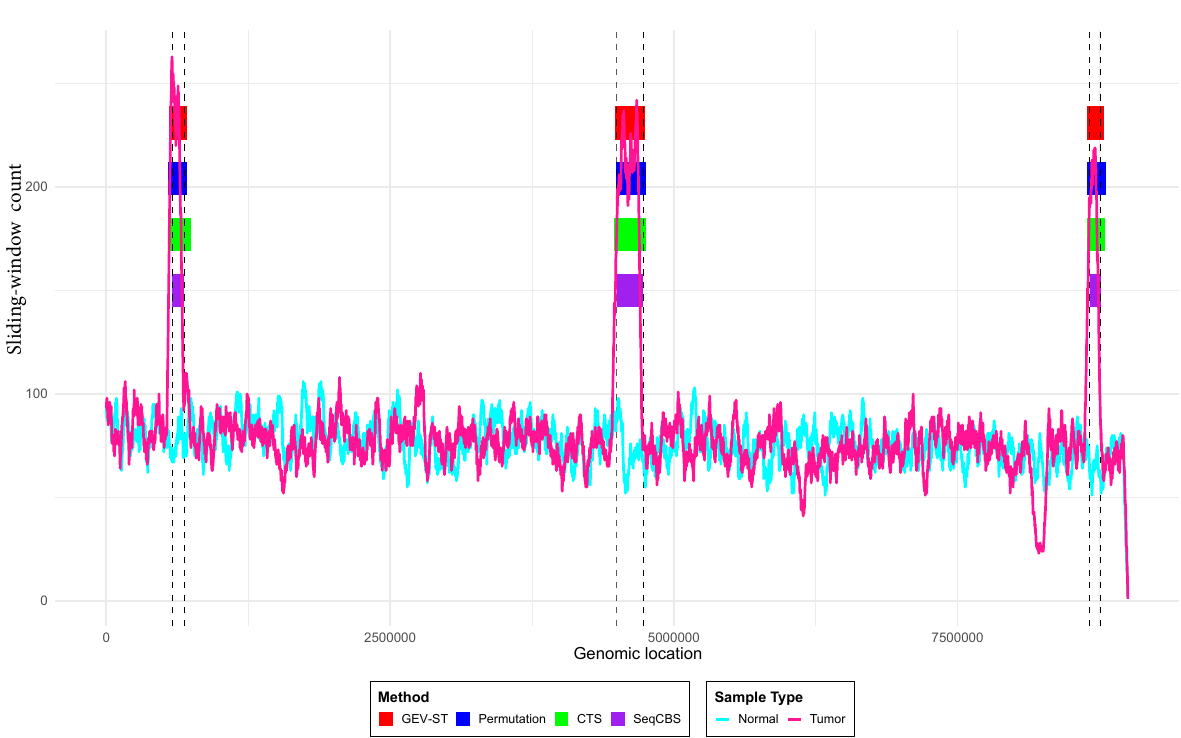}
	\caption{Sliding-window  count with window size  $\omega = 0.005$.}
	\label{f-13}
\end{figure}

\begin{figure}[h!]
	\centering
	\includegraphics[width=.75\linewidth]{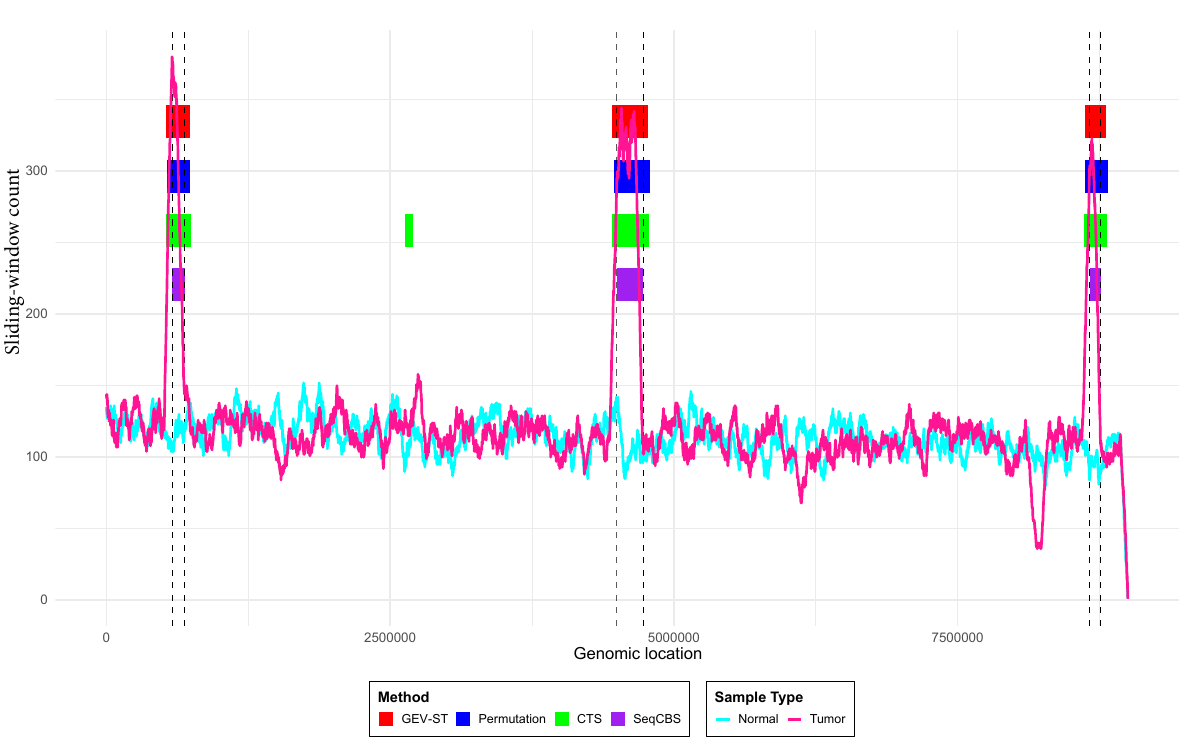}
	\caption{Sliding-window  count with window size  $\omega = 0.0075$.}
	\label{f-14}
\end{figure}

\begin{figure}[h!]
	\centering
	\includegraphics[width=.75\linewidth]{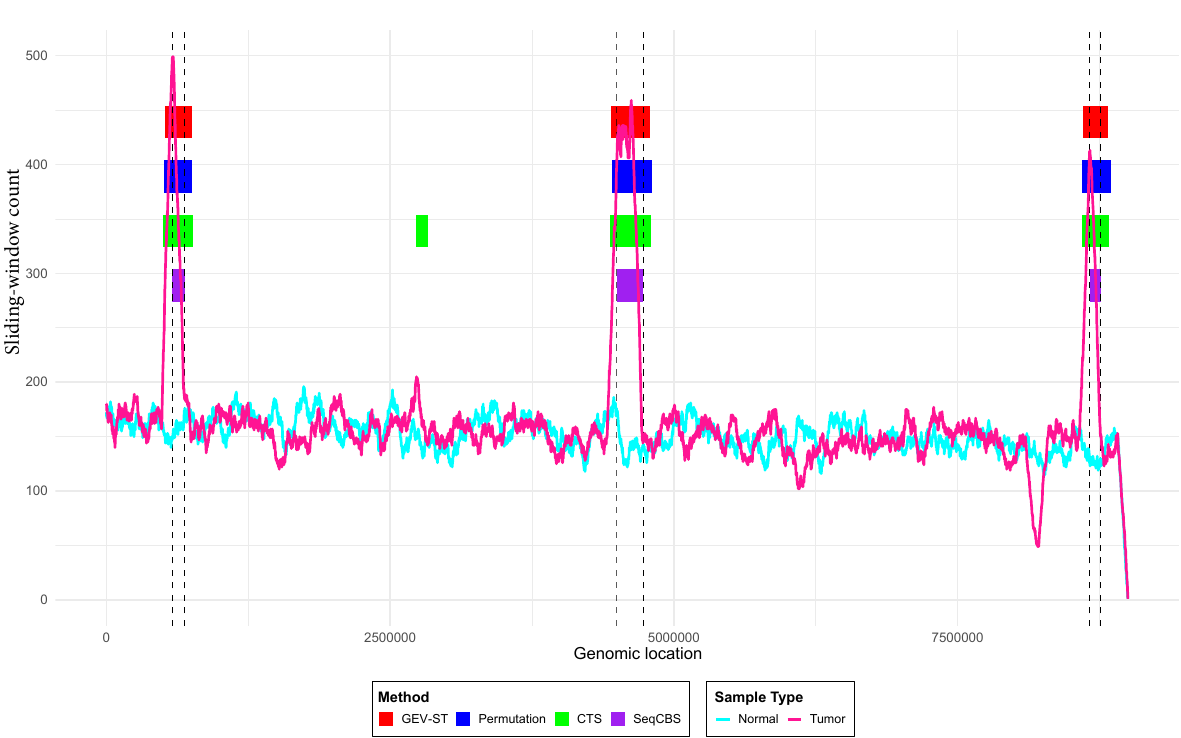}
	\caption{Sliding-window  count with window size  $\omega = 0.01$.}
	\label{f-15}
\end{figure}

\begin{table}[h!]
\centering
\caption{Detected CNV regions using $\omega=0.005$. Start and End denote the genomic start and end locations. SeqCBS does not use window scanning.}
\label{T-1}
\begin{tabular}{c|cc|cc|cc|cc}
\toprule
Region &
\multicolumn{2}{c|}{GEV-ST} &
\multicolumn{2}{c|}{Permutation} &
\multicolumn{2}{c|}{SeqCBS} &
\multicolumn{2}{c}{CTS} \\
\cmidrule(r){2-3}
\cmidrule(r){4-5}
\cmidrule(r){6-7}
\cmidrule(r){8-9}
&Start & End &
Start & End &
Start & End &
Start & End \\
\midrule
1&549825  & 710327  & 545780  & 711310  & 578871  & 688700  & 544798  & 749795 \\
2&4483825 & 4748198 & 4489440 & 4756271 & 4497332 & 4730251 & 4473509 & 4757102 \\
3&8640982 & 8788162 & 8641733 & 8810311 & 8662374 & 8759728 & 8636554 & 8799374 \\
\bottomrule
\end{tabular}
\end{table}

   \begin{table}[h!]
   	
   	\centering
   	
   	\scriptsize
   	
   	\caption{Detected CNV regions using $\omega=0.0075$. SeqCBS does not use window scanning.}
   	
   	\label{T-2}
   	
   	\begin{tabular}{c|cc|cc|cc|cc}	
   		\toprule
   		Region &
   		\multicolumn{2}{c|}{GEV-ST} &
   		\multicolumn{2}{c|}{Permutation} &
   		\multicolumn{2}{c|}{SeqCBS} &
   		\multicolumn{2}{c}{CTS} \\
   		\cmidrule(r){2-3}
   		\cmidrule(r){4-5}
   		\cmidrule(r){6-7}
   		\cmidrule(r){8-9}
   		& Start & End & Start & End & Start & End & Start & End \\
   		
   		\midrule
   		
   		1 & 527830  & 738214  & 531691  & 735953  & 578871  & 688700  & 521143  & 748603 \\
   		
   		2 & 4458055 & 4771001 & 4473509 & 4785853 & 4497332 & 4730251 & 2631487 & 2698773 \\
   		
   		3 & 8618019 & 8808950 & 8621253 & 8828725 & 8662374 & 8759728 & 4455868 & 4781786 \\
   		
   		4 & ---     & ---     & ---     & ---     & ---     & ---     & 8617420 & 8818110 \\
   		
   		\bottomrule
   		
   	\end{tabular}
   	
   \end{table}
  
 \begin{table}[h!]
\centering
\caption{Detected CNV regions using $\omega=0.01$. SeqCBS does not use window scanning.}
\label{T-3}
\begin{tabular}{c|cc|cc|cc|cc}
\toprule
Region &
\multicolumn{2}{c|}{GEV-ST} &
\multicolumn{2}{c|}{Permutation} &
\multicolumn{2}{c|}{SeqCBS} &
\multicolumn{2}{c}{CTS} \\
\cmidrule(r){2-3}
\cmidrule(r){4-5}
\cmidrule(r){6-7}
\cmidrule(r){8-9}
 & Start & End & Start & End & Start & End & Start & End \\
\midrule
1 & 512394  & 753829  & 505182  & 756358  & 578871  & 688700  & 498456  & 764640 \\
2 & 4448253 & 4787006 & 4452259 & 4807960 & 4497332 & 4730251 & 2725904 & 2831301 \\
3 & 8606486 & 8822612 & 8594559 & 8855197 & 8662374 & 8759728 & 4441101 & 4800680 \\
4 & ---     & ---     & ---     & ---     & ---     & ---     & 8592703 & 8837444 \\
\bottomrule
\end{tabular}
\end{table}

\FloatBarrier 

\section{Summary and Discussion}

This paper develops a scan statistic framework for cluster detection in a two-sample nonhomogeneous Poisson process (NHPP) setting, motivated by copy number variation (CNV) detection from next-generation sequencing data. The main methodological contribution is to combine a time transformation with an extreme-value approximation to obtain a usable null distribution for the scan statistic under heterogeneous baselines. This allows the scan statistic to be applied directly to point-process data without requiring binning or parametric modeling of the baseline intensity. To complement the asymptotic calibration, we also introduce a permutation procedure that provides a fully nonparametric significance assessment.

The theoretical results show that, under the null transformation, the scan statistic converges to a GEV distribution with an extremal-index correction that accounts for dependence among overlapping windows. This provides a practical way to estimate critical values using maximum likelihood estimation of the GEV parameters together with an estimate of the extremal index. The limiting result in Theorem 1 is established for the oracle transformation based on the true distribution function \(F_X\). In practice, \(F_X\) is replaced by the empirical distribution function \(\widehat F_X\), and the transformed case observations are \(V_j=\widehat F_X(Y_j)\). For the same limiting approximation to remain valid for the empirical-CDF-transformed grid scan statistic, the estimation error in \(\widehat F_X\) should be small relative to the scale of the scan statistic after normalization. In particular, the perturbation caused by replacing \(F_X\) with \(\widehat F_X\) should not change the maximum window count at the order of the normalizing scale used in the extreme-value limit. Under such a condition, the empirical transformation has no first-order effect on the limiting distribution. A full theoretical treatment of this empirical-transformation error is an important direction for future work.

Simulation results in Section~3 demonstrate that both GEV-ST and permutation methods provide reliable performance under a wide range of heterogeneous baseline intensities. The permutation method maintains Type I error rates close to the nominal level across the considered window sizes and intensity settings. The GEV approximation gives comparable calibration, but it shows slightly inflated Type I error when the window size is large. This behavior is consistent with the asymptotic theory, because larger windows increase overlap among neighboring scan windows and move the finite-sample setting farther from the shrinking-window regime assumed in Theorem~\ref{thm:gev-grid-scan}. Under alternatives with embedded clusters, GEV-ST and permutation generally show
strong power for small and moderate window sizes, especially for weaker signals
under slowly or moderately increasing baselines. For rapidly increasing exponential
baselines and larger windows, CTS can become more competitive and may achieve
higher power.

The CNV application in Section~4 further illustrates the practical performance of the proposed methods. For the analyzed sequence, the normal and tumor samples have sizes $n_0=15193$ and $n_1=16452$, respectively. The application uses the window sizes $\omega=0.005$, $0.0075$, and $0.01$ on the transformed $[0,1]$ scale, which are motivated by the window-size conditions in Theorem~\ref{thm:gev-grid-scan}. Across these window sizes, the GEV-ST and permutation methods consistently identify the same three main CNV regions, with detected intervals close to those reported by SeqCBS. The corresponding empirical CDF plots show clear separation between the tumor and normal samples in the detected regions, and the sliding-window  count plots show clear peaks at the same genomic locations. These results indicate that the proposed methods are stable across nearby choices of $\omega$ in the finite-sample setting.

The comparison with CTS also highlights a difference between the proposed scan-statistic methods and the continuous testing approach. CTS detects the main CNV regions, but for $\omega=0.0075$ and $\omega=0.01$, it also identifies additional regions beyond the three main ones. This suggests that CTS may be more sensitive to neighboring or broader local departures, whereas the GEV-ST and permutation methods give a more stable summary of the dominant CNV regions. Thus, the proposed methods provide results that are consistent with SeqCBS and CTS while maintaining a direct scan-statistic interpretation based on the maximum local tumor-read excess.

The simulation and application results suggest practical guidance for choosing between the two calibration methods. For small or moderate window sizes that are close to the shrinking-window regime, the GEV approximation is attractive because it is computationally efficient and can provide competitive power while maintaining reasonable Type I error control. The permutation method is useful as a robust nonparametric calibration, especially when finite-sample accuracy is a priority or when the dependence among overlapping windows is stronger. 


\section*{Statements and Declarations}

\subsection*{Funding}
The authors did not receive support from any organization for the submitted work.

\subsection*{Competing interests}
The authors have no relevant financial or non-financial interests to disclose.

\subsection*{Ethics approval}
Not applicable.

\subsection*{Consent to participate}
Not applicable.

\subsection*{Consent for publication}
Not applicable.

\subsection*{Data availability}
The synthetic sequencing data analyzed in this study are available from the SeqCBS package in R. The simulation settings and analysis procedures are described in the manuscript.

\subsection*{Code availability}
The code used to implement the proposed methods and reproduce the numerical results is available from the corresponding author upon reasonable request.

\subsection*{Author contributions}
Asanka R. Duwage and Tung-Lung Wu contributed equally to this work. Both authors read and approved the final manuscript.

\bibliographystyle{sn-basic}      
\bibliography{reference}

\end{document}